\newcommand{\comment}[1]{}
\newtheorem{theorem}{Theorem}
\newtheorem{lemma}{Lemma}
\newtheorem{claim}{Claim}
\newtheorem{corollary}{Corollary}
\newenvironment{proof}{{\bf Proof:}}{\hfill\rule{1.5mm}{3mm}\vspace{0.1in}}
\newcommand{\qed}{}
\begin{document}
\bibliographystyle{abbrv}

\title{A simpler load-balancing algorithm for range-partitioned data
  in Peer-to-Peer systems}
\author{Jakarin Chawachat\thanks{Corresponding author}}
\author{Jittat Fakcharoenphol}
\affil{Department of Computer Engineering\\
Kasetsart University\\
Bangkok Thailand
}

\maketitle
\let\thefootnote\relax\footnote{ This work has been partially supported by KU-RDI grant number Wor-Tor(Dor) 84.53.\\
\textbf{Email address:} cjakarin@gmail.com (Jakarin Chawachat), jittat@gmail.com (Jittat Fakcharoenphol)}
\begin{abstract}
  Random hashing is a standard method to balance loads among nodes in
  Peer-to-Peer networks. However, hashing destroys locality properties of
  object keys, the critical properties to many applications, more
  specifically, those that require range searching. To preserve a key
  order while keeping loads balanced, Ganesan, Bawa and Garcia-Molina
  proposed a load-balancing algorithm that supports both object
  insertion and deletion that guarantees a ratio of 4.237 between the
  maximum and minimum loads among nodes in the network using constant
  amortized costs. However, their algorithm is not straightforward to
  implement in real networks because it is recursive.  Their algorithm
  mostly uses local operations with global max-min load information.
  In this work, we present a simple non-recursive algorithm using
  essentially the same primitive operations as in Ganesan {\em et
    al.}'s work.  We prove that for insertion and deletion, our
  algorithm guarantees a constant max-min load ratio of 7.464 with
  constant amortized costs.
\end{abstract}

\newcommand{\minbalance}{{\sc MinBalance}}
\newcommand{\split}{{\sc Split}}
\newcommand{\splitmax}{{\sc SplitMax}}
\newcommand{\splitnbr}{{\sc SplitNbr}}
\newcommand{\nbradj}{{\sc NbrAdjust}}
\newcommand{\reorder}{{\sc Reorder}}
\newcommand{\adjload}{{\sc AdjustLoad}}

\section{Introduction}

One of important issues in Peer-to-Peer (P2P) networks is load balancing.
Load balancing is a method that balances loads among nodes in the
networks. In P2P networks, a standard technique usually used
to spread keys over nodes is hashing.  The research on the
construction of distributed hash tables (DHTs) is very active
recently. However, hashing destroys locality properties of keys. This
makes many applications difficult to support a range searching.

Ganesan, Bawa and Garcia-Molina~\cite{GanesanBGM04-vldb} proposed a
sophisticated load-balancing algorithm on top of linearly ordered
buckets. Since the ordering of keys is preserved, item searching is
simple and  range searching is naturally supported.  Their algorithm,
called the {\adjload} algorithm, uses two basic balancing operations,
{\nbradj} and {\reorder} with global max-min information, and can
maintain a good ratio of 4.237 between the maximum and minimum loads
among nodes in the network, while requiring a constant amortized work
per operation. We give the description of their algorithm in
Section~\ref{sect:GBGM}.

Although both operations are easy to state, the {\adjload} algorithm
is recursive; thus, it is not straightforward to implement in
distributed environments.  Also, in the worst case, the {\adjload}
algorithm does not guarantee the number of invoked balancing
operations, the number of updated data (partition change and load
information) and the number of affected nodes.  Since each balancing
operation may require global information, the number of global queries
may be higher than the number of insertions and deletions.

In this paper, we present a simpler, non-recursive load-balancing
algorithm that uses the same primitive operations, {\nbradj} and
{\reorder} as in Ganesan {\em et al.}, but for each insertion or
deletion, primitive operations are called at most once.  
This also implies that our proposed algorithm 
only makes queries to global at most once per insertion or deletion.  

As in Ganesan {\em et al.}, we prove that the ratio between the maximum
load and the minimum load, the imbalance ratio, is at most 7.464 and
the amortized cost of the algorithm is a constant per operation.

Our algorithm uses two high-level operations, {\minbalance} and
{\split}:  

\begin{enumerate}

\item {\bf The {\minbalance} operation} occurs when there is an
  insertion at some node $u$ causing the load of $u$ to be too high;
  in this case, we shall take the node $v$ with the minimum load, transfer
  its load to one of $v$'s neighbors with a lighter load, and let $v$
  share half the load with $u$.

\item {\bf The {\split} operation} occurs when there is a deletion at
  some node $u$ causing the load of $u$ to be too low; in this case,
  we either let $u$ take some load from one of its neighbors, or
  transfer all $u$'s load to its neighbors and let it share half the
  load with the maximum loaded node.

\end{enumerate}

\textbf{Overview of the techniques.}  When considering only insertion,
the key to proving the imbalance ratio is to analyze how the load of 
neighbors of the minimum loaded node changes over time.  Let $z$
denote the lightly-loaded neighbor of $v$.  The bad situation can occur
when $z$ takes  entire loads of the minimum loaded nodes (at
various points in time) for too many times; this could cause the load
of $z$ to be too high compared with the minimum load.  We show that this
is not possible because  each time a load is transferred to $z$, the
minimum load is increased to within some constant factor of the load
of $z$, thus keeping the imbalance ratio bounded.

We make an important assumption in our analysis of the insertion-only
case, namely, that the minimum load can never decrease.  For the
general case, this is not true.  We maintain the general analysis
framework by introducing the notion of phases such that within a
phase the minimum load remains monotonically non-decreasing.  To prove
the result in this case, we show that inside a phase, the imbalance
ratio is small; and when we enter a new phase (i.e., when the minimum
load decreases), we are in a good starting condition.

\textbf{Organization.} The remainder of the paper is organized
as follows. In Section~\ref{sec:relatedwork}, we discuss related
work. Section~\ref{sect:model} describes the model, states the basic
definitions and reviews the {\adjload} algorithm. In
Section~\ref{sect:algorithm-insertonly}, we consider the insert-only
case. We propose the {\minbalance} operation, analyze the imbalance
ratio and calculate the cost of the algorithm. In
Section~\ref{sect:algorithm-delete}, we consider the general case,
which has both insertion and deletion.  We propose the {\split}
operation that is used when a deletion occurs and analyze the imbalance
ratio and the cost of the load-balancing algorithm.

\section{Related work}
\label{sec:relatedwork}

Research on complex queries in P2P networks has long been an
interesting problem. It started when Harren {\em et
  al.}~\cite{Harren02complexqueries} argued that  complex queries
are important open issues in P2P networks. After that, much research has appeared on search methods in P2P networks. For more
information, readers are referred to the survey on searching in P2P
networks by Risson and Moors~\cite{Risson06Survey}.

Range searching is one of the search methods that arises in many fields
including P2P systems. Since most data structures for distributed
items are based on distributed hash tables (DHTs)~\cite{Kademlia02, CAN01,
  Pastry01,  Chord03, tapestry04}, early work focused mainly
on building range search data structures on top of DHTs, e.g.,
PHT~\cite{PHT2004} and DST~\cite{DST06}, based on binary search trees
and segment trees, respectively. These data structures do not address a
load-balancing mechanism when supporting insertion and deletion. Moreover,
hot spots may occur when loads are highly imbalanced.

There are other data structures for range searching in P2P networks,
which are not based on DHTs. SkipNet~\cite{SkipNet03}, which is adapted
from Skip Lists~\cite{skiplist90}, can support range searching or load
balancing but not both. Skip Graphs~\cite{Aspnes-SODA03} is also adapted from Skip Lists and
addresses on the range searching but it does not address on load balancing on the number of items per nodes.

Many data structures support efficient range queries and show a good 
load balancing property in experiments, e.g.,
Mercury~\cite{Mercury_sigcomm04}, Baton~\cite{baton05},
 Chordal graph~\cite{Joung08}, Dak~\cite{Dak06} and
Yarqs~\cite{Yarqs09}. However, they do not have any
theoretical guarantees on load distribution among nodes in their data
structures.

There are mainly two groups of researchers trying to address both range searching and
load-balancing theoretically. The former is the group of Karger and Ruhl~\cite{Karger03newalgorithms, 
Karger04simpleefficient}. The latter is the group of Ganesan and Bawa~\cite{Ganesan03} 
and Ganesan, Bawa and Garcia-Molina~\cite{GanesanBGM04-vldb}. 
Both of them use two operations. The first operation balances loads between two consecutive nodes 
by transferring the load from the node with higher load to the node with lower load. For the second operation, a node $i$ 
transfers its entire load to its neighbor and relocates its position to share load with 
a node $j$ in a new position.

Karger and Ruhl~\cite{Karger03newalgorithms, Karger04simpleefficient} 
presented the randomized protocol, where each node chooses another node 
to perform a balancing operation at random.  Load balancing operations
should be performed regularly, even when there is no insertion or deletion
at the node.  More precisely, they showed that if each node contacts $\Omega(\log n)$ 
other nodes then the load of each node is between $\frac{\epsilon}{16}\cdot L$ and 
$\frac{16}{\epsilon}\cdot L$, where $L$ denotes the average load and $\epsilon$ is a 
constant with $0 < \epsilon < \frac{1}{4}$, with high probability, where the hidden constant
in the $\Omega$ notation depends on $\epsilon$.  They also showed that
the cost of load balancing steps can be amortized over the constant costs of 
insertion and deletion.  Again, the constants depend on the value of $\epsilon$.
We note that this implies a high probability bound of at least 4,096 for the imbalance ratio.

Ganesan and Bawa~\cite{Ganesan03} and Ganesan, Bawa and
Garcia-Molina~\cite{GanesanBGM04-vldb} proposed a distributed
load-balancing algorithm that works on top of any linear data
structures of items. Their algorithm is recursive and uses 
information about the maximum and minimum-loaded nodes.  They guarantee a
constant imbalance ratio with a constant cost per insertion and
deletion.  The ratio can be adjusted to $4.237$.  This is much smaller
than that of Karger and Ruhl.  The major drawback is that their
algorithm requires  global knowledge of the maximum and 
minimum-loaded nodes.  
While this issue is important in practice (e.g., when building 
real P2P systems), the cost of finding global information 
can be amortized over the cost of other operations, e.g., node searching.
For completeness, we discuss how to obtain this global information in
Section~\ref{sect:p2p}.

\section{Problem setup, cost model and the algorithm of Ganesan {\em
    et al.}}
\label{sect:model}

In this section, we describe the problem setup, discuss the cost of an
algorithm and review the algorithm of Ganesan {\em et al.}, the
{\adjload} algorithm.

\subsection{Problem setup}

We follow closely the basic setup of~\cite{GanesanBGM04-vldb}.  The
system consists of $n$ nodes and maintains a collection of keys. Let
$V$ be the set of all nodes. The key space is partitioned into $n$
ranges, with boundaries $R_0\leq R_1\leq\cdots\leq R_n$. Let $N_i$ be
the $i$-th node that manages a range $[R_{i-1},R_i)$. For any node $u\in
V$, let $L(u)$ be the number of keys stored in $u$. At any point of
time, there is an ordering of the nodes. This ordering defines {\em
  left} and {\em right} relations among nodes and this relation is
crucial to our analysis.

As in previous work, in some operation, a node requires non-local
information, namely the maximum and minimum loads and the
locations of the nodes with the maximum and minimum loads.

When a key is inserted or deleted, the node that manages the range
containing the key must update its data.  After that, the
load-balancing algorithm is invoked. Our goal is to maintain the ratio
of the maximum load to the minimum load, called the imbalance ratio.
We say that a load-balancing algorithm guarantees an \textit{imbalance
  ratio} $\sigma$ if after each insertion or deletion of a key and the
execution of the algorithm, $\max_u L(u)\leq \sigma\cdot\min_u
L(u)+c_0$ for some constant $c_0$.

We assume that, initially, each node has a small constant load, $c_0$.
As in~\cite{GanesanBGM04-vldb}, we ignore the concurrency issues and
consider only the serial schedule of operations.

\subsection{The cost}

To analyze the cost of a load-balancing algorithm, we follow the three
types of costs discussed in Ganesan {\em et al.}. 

\begin{enumerate}
\item {\bf Data Movement.} Each operation that moves a key from one
  node to another is counted as a unit cost.
\item {\bf Partition Change.}  When load-balancing steps are
  performed, the range of keys stored in each node may change.  This
  change has to be propagated through the system so that the next
  insertion or deletion goes to the right node.  
\item {\bf Load Information.} This work occurs when a node requests
  non-local information, e.g., requesting the node with the
  minimum or maximum load.
\end{enumerate}

In this paper, we analyze the amortized cost of the load-balancing
algorithm, i.e., we consider the worst-case cost of a sequence of $m$
load-balancing steps instead of a single one.

As in Ganesan {\em et al.}, we first analyze a simpler model that
accounts only for the data movement cost.  For partition change cost and load
information cost, we assume that there is a centralized server which
maintains the partition boundaries of each node and can answer the
request for global information. 

This assumption can be removed as shown in Ganesan {\em et al.}.  For
completeness, we discuss this in Section~\ref{sect:p2p}.

\subsection{The algorithm of Ganesan {\em et al.}}
\label{sect:GBGM}

We briefly review the algorithm introduced
in~\cite{GanesanBGM04-vldb}, the {\adjload} algorithm. The algorithm
uses two basic operations, {\sc NbrAdjust} and {\sc Reorder}
operations, defined as follows.

{\sc NbrAdjust}: \textit{Node $N_i$ transfers its load to its neighbor
  $N_{i+1}$. This may change the boundary $R_i$ of $N_i$ and
  $N_{i+1}$}.

{\sc Reorder}: \textit{ Consider a node $N_i$ with an empty range
  $\left[ R_i, R_i \right)$.  $N_i$ relocates its position and
  separates the range of $N_j$.  Then, the range $\left[ R_j, X
  \right)$ is managed by $N_j$, whereas the range $\left[ X, R_{j+1}
  \right)$ is managed by $N_i$ for some $X, R_j \leq X \leq
  R_{j+1}$. Finally, rename nodes appropriately.}

For some constant $c$ and $\delta$, they define a sequence of
thresholds $T_m=\left\lfloor c\delta^m \right\rfloor$, for all $m \geq
1$, used to trigger the {\adjload} procedure described later.  When
$\delta=2$, they call their algorithm the Doubling Algorithm.  
The {\adjload} procedure works as well when $\delta>\phi=\frac{(1+\sqrt{5})}{2}\approx 1.618$, the golden
ratio. They call their algorithm that operates at that ratio, the
Fibbing Algorithm. They prove that the {\adjload} procedure running on that ratio
would guarantee the imbalance ratio $\sigma$ of $\delta^3\approx 4.237$.

Given the threshold sequence, the {\adjload} procedure is as follows. When
a node $N_i$'s load crosses a threshold $T_m$, the load-balancing
algorithm is invoked on that node. Let $N_j$ be the lightly-loaded
neighbor of $N_i$. If the load of $N_j$ is not too high, the {\sc
  NbrAdjust} operation is applied, following by two recursive calls of 
{\adjload} on $N_i$ and $N_j$. Otherwise, it checks the load of the
minimum-loaded node $N_k$, and if the imbalance ratio is too high, tries to balance the
loads with the {\sc Reorder} operation as follows.
First, $N_k$ transfers its load to its lightly-loaded neighbor $N$.
Then, the {\sc Reorder} operation is invoked on $N_k$ and $N_i$, and
finally another call to the {\adjload} procedure is invoked on node $N$.

%
%

\section{The algorithm for the insert-only case}
\label{sect:algorithm-insertonly}

In this section, we present the algorithm for the insert-only case.
This case is simpler to analyze and provides general ideas on how to
deal with the general case. It is also of practical interest because
in many applications, as in a file sharing, deletions rarely occur.

We present the {\minbalance} procedure which uses the same primitive
operations, the {\sc NbrAdjust} and {\sc Reorder} operations. However,
it is simpler than the algorithm {\adjload} of Ganesan {\em et al.}
Notably, the {\minbalance} procedure is not recursive and performs
only a constant number of primitive operations. We prove the bound on
the imbalance ratio of 7.464 and show that the
amortized cost of an insertion is a constant.

\subsection{The MinBalance operation}

After an insertion occurs on any node $u$, the {\minbalance}
operation is invoked. If the load of $u$ is more than $\alpha$ times
 the load of the current minimum-loaded node $v$, node $v$ transfers
its entire load to one of its neighbor nodes, which has a lighter load. 
After that, $v$ halves the load of $u$. We call these steps,  {\minbalance} steps.
Note that this procedure requires  information about the minimum
load; the system must maintain this information.

\begin{algorithm} {\bf Procedure} {\minbalance} $(u)$
  \begin{algorithmic}[1]
    \label{alg:minbalance}
    \STATE Let $v$ be the minimum-loaded node in the system.
    \IF{$L(u) > \alpha\cdot L(v)$} 
    \STATE //{\minbalance} steps 
    \STATE Let $z$ be the lightly-loaded neighbor of $v$. 
    \STATE Transfer all keys of $v$ to $z$. 
    \STATE Transfer a half-load of $u$ to $v$,
    s.t., $L(u) = \left \lceil \frac{L(u)}{2}\right\rceil$ and $L(v) =
    \left\lfloor \frac{L(u)}{2} \right\rfloor$.  
    \STATE \{Rename nodes appropriately.\}
    \ENDIF
  \end{algorithmic}
\end{algorithm}

\subsection{Analysis of the imbalance ratio for the insert-only case}

We assume the notion of time of the system in a natural way.  For
simplicity, we assume that each operation completes instantly.

For any time $t$, let $L_t(u)$ and $L'_t(u)$ denote the load of node
$u$ after time $t$ and right before time $t$, i.e., $t-\epsilon$
for some $\epsilon>0$, respectively.  Let $Min_t$ and $Min'_t$ denote
the minimum load in the system after time $t$ and right before time
$t$ respectively, i.e., $Min_t =\min_{u\in V} L_t(u)$ and $Min'_t
=\min_{u\in V} L'_t(u)$.

We shall prove that the following invariant holds all the time.
\begin{quote} {\sc For any time $t$, the load of any node is not over
    $(\alpha+2)$ times the minimum load.}
\end{quote}
Note that the ratio of the maximum load to the minimum load is bounded by
$\alpha+2$, i.e., $\frac{Max_t}{Min_t}\leq (\alpha+2)$ where $Max_t$
is the maximum load at time $t$.  We  prove this invariant in
Theorem~\ref{thm:maxload}.

\subsubsection{Overview of the analysis}

While the analysis is rather involved, the idea is not very difficult to understand.  
This section gives an short overview to the analysis.

The main idea of the analysis is to prove that the imbalance ratio
remains under a constant after any operation.  We assume that the system starts with a uniform 
load distribution.   We first show the key property for the insert-only case, that is,
the minimum load never decreases.  Then, we analyze how the insertions change the loads of the nodes.

When an insertion occurs on any node $u$, there are two cases
depending on if $u$ calls   {\minbalance} steps or not.  
After insertion, the bound on the load of $u$ itself can be verified easily.

However, the insertion also affects two other nodes, i.e., 
the minimum loaded node $v$ and its lighter neighbor $z$.
When  {\minbalance} steps are invoked, the minimum loaded 
node $v$ transfers its entire load to $z$ and shares a half load of $u$.   
While showing the bound on the load of $v$ is pretty straight-forward,
the bound on $z$ requires more analysis because $z$
may receive loads many times through out the execution of the algorithm. 
The harder case for showing the bound on $z$ is when $z$ repeatedly takes
loads from its neighbors without any insertions to $z$.  However, in that case,
we know that $z$'s neighbors at some point become the minimum loaded node;
thus, we can establish the bound on the load of $z$ relative to the minimum load and
prove the required ratio.

\subsubsection{The analysis of the imbalance ratio}
First, we show an important property of the minimum load in
the system, i.e., after any operation, the minimum load never
decreases.
\begin{lemma}
\label{lem:minload_not_decrease} 
Suppose $\alpha > 2$. Then the minimum load of the system never decreases.
\end{lemma}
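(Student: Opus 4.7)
The plan is to prove the invariant $Min_t \ge Min'_t$ for every operation by a case analysis on whether \minbalance{} steps are invoked. Since the only thing that happens between times is (i) an insertion at some node $u$, possibly followed by (ii) the body of \minbalance{} (lines 4--7 of Algorithm~1), it suffices to check both cases separately.

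First I would handle the easy case where the test on line~2 fails, so no keys are transferred. Then the only change is $L(u)\gets L'_t(u)+1$. Since this strictly increases $L(u)$ and leaves every other node's load untouched, $\min_w L_t(w)\ge \min_w L'_t(w)=Min'_t$.

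Next I would handle the case where the \minbalance{} steps fire. Let $v$ be the minimum-loaded node and $z$ its lighter neighbor at the moment the test is taken. Write $a=L'_t(u)+1$ for the load of $u$ just after insertion (before the rebalancing); the trigger condition gives $a>\alpha\cdot L'_t(v) = \alpha\cdot Min'_t$. Three nodes change load:
\begin{itemize}
\item The neighbor $z$ receives $v$'s keys, so $L_t(z)=L'_t(z)+L'_t(v)\ge L'_t(z)\ge Min'_t$.
\item Node $u$ ends with $L_t(u)=\lceil a/2\rceil$, and node $v$ ends with $L_t(v)=\lfloor a/2\rfloor$; of the two, $\lfloor a/2\rfloor$ is the smaller.
\end{itemize}
All other nodes are unaffected and thus still have load $\ge Min'_t$. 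So it remains to verify $\lfloor a/2\rfloor\ge Min'_t$.

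The main (and essentially only) computation is this last inequality, which is where the hypothesis $\alpha>2$ is used. Since loads are integers, $a>\alpha\cdot Min'_t>2\cdot Min'_t$ forces $a\ge 2\cdot Min'_t+1$, hence $\lfloor a/2\rfloor\ge \lfloor(2\cdot Min'_t+1)/2\rfloor = Min'_t$. Combining with the bounds on $L_t(z)$, $L_t(u)$, and the unchanged nodes yields $Min_t\ge Min'_t$, completing the induction. The only delicate point is making sure the integer rounding on line~6 does not cost more than the slack provided by $\alpha>2$; everything else is bookkeeping.
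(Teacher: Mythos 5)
Your proof is correct and follows essentially the same route as the paper's: insertions can only increase loads, so one only needs to check the three nodes touched by the \minbalance{} steps, with the hypothesis $\alpha>2$ ensuring that the half-load given to $v$ (and kept by $u$) is still at least the old minimum. Your treatment of the floor on line~6 is actually slightly more careful than the paper's, which just asserts the conclusion from $L(u)>2\cdot Min_t$.
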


\begin{proof}
  Inserting keys into the system only increases load.  The only steps
  that decrease loads are the  {\minbalance} steps; therefore, we
  consider only these steps.  For each {\minbalance} steps, there are
  three nodes involved, i.e., node $u$ which initiates the 
  {\minbalance} steps at some time $t$, the current minimum-loaded
  node $v$ and node $z$ which is the lightly-loaded neighbor of $v$ at
  time $t$.  First, $v$ transfers its entire keys to $z$; hence $z$'s
  load increases.  The new load of each of $u$ and $v$ is a half of
  $u$'s current load.  Since $u$ invokes the  {\minbalance} steps, its
  load must more than $\alpha\cdot Min_t > 2\cdot Min_t$.  Thus, we
  have that after the steps, $u$'s load and $v$'s load are at least $Min_t$
  as required. \qed
\end{proof}

After an insertion occurs on any node $u$, the load of $u$  changes.
Next lemma guarantees a good ratio on node $u$.

\begin{lemma}
\label{lem:insert-property} 
Consider an insertion occurring on node $u$ at some time $t$.  Suppose
$\alpha\geq 3$. After an insertion and its corresponding
load-balancing steps, $L_t(u)\leq \alpha \cdot Min_t$.
\end{lemma}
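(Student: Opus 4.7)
My plan is to distinguish two cases according to whether the {\minbalance} steps are actually executed after the insertion at $u$. In the first case, the guard $L(u) > \alpha\cdot L(v)$ fails immediately after the insertion, so the target bound drops out of the negated guard. In the second case, $u$ ends up with (roughly) half of its post-insertion load, so I will combine the inductive $(\alpha+2)$-invariant being established in Theorem~\ref{thm:maxload} with Lemma~\ref{lem:minload_not_decrease} to conclude. In both cases, Lemma~\ref{lem:minload_not_decrease} is what lets me pass between $Min'_t$ and $Min_t$.

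For Case~1, I will note that when the guard fails, no keys are moved during the load-balancing phase and the insertion only increments $u$'s load by one, so $L_t(u) = L'_t(u)+1$. The failed guard then reads $L_t(u) \leq \alpha\cdot L(v)$ for $v$ the current minimum-loaded node, which is exactly $L_t(u) \leq \alpha\cdot Min_t$ since $Min_t = L(v)$ here. The only corner to check is when $u = v$, but then the guard would force $\alpha < 1$, so it cannot trigger and we stay in Case~1 anyway.

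For Case~2, the {\minbalance} steps leave $u$ with $L_t(u) = \lceil (L'_t(u)+1)/2 \rceil$. I will apply, inductively, the $(\alpha+2)$-invariant at time $t-\epsilon$ to get $L'_t(u) \leq (\alpha+2)\cdot Min'_t$, substitute into the ceiling expression, and reduce the desired inequality $L_t(u) \leq \alpha\cdot Min'_t$ to one of the form $O(1) \leq (\alpha-2)\cdot Min'_t$. This holds as soon as $\alpha \geq 3$ and $Min'_t$ is at least a suitable constant, which is guaranteed by the initial-load assumption $c_0$ together with Lemma~\ref{lem:minload_not_decrease}. Finally, applying Lemma~\ref{lem:minload_not_decrease} once more yields $\alpha\cdot Min'_t \leq \alpha\cdot Min_t$ and closes the argument. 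The only mildly delicate point, which is where $\alpha \geq 3$ (rather than the weaker $\alpha > 2$ of Lemma~\ref{lem:minload_not_decrease}) is actually needed, is in absorbing the ``$+1$'' produced by the ceiling in the halving step; this is the sole obstacle, and it is routine once $Min'_t$ is bounded below by a constant.
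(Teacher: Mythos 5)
Your proposal is correct and follows essentially the same route as the paper: split on whether the {\minbalance} guard fires, use the failed guard directly in one case, and in the other bound the post-insertion load by $(\alpha+2)\cdot Min'_t+1$ via the invariant, halve, and absorb the ceiling using $\alpha\geq 3$ before passing from $Min'_t$ to $Min_t$ with Lemma~\ref{lem:minload_not_decrease}. Your explicit remark that absorbing the ``$+1$'' needs $Min'_t$ bounded below by a small constant is a point the paper glosses over (its step $\lceil((\alpha+2)Min'_t+1)/2\rceil\leq((\alpha+3)/2)\cdot Min'_t$ implicitly needs the same), so if anything you are slightly more careful.
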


\begin{proof}
  After an insertion, we consider two cases.  The first case is when the
  {\minbalance} steps are invoked. After that the load of $u$ decreases by half. 
  Note that the load of $u$ after insertion cannot be over
  $(\alpha+2)\cdot Min'_t+1$ which comes from the invariant and a key
  insertion. Then, we have that
  \[L_t(u)\leq\left\lceil\frac{(\alpha+2)\cdot
      Min'_t+1}{2}\right\rceil\leq\frac{(\alpha+2)\cdot
    Min'_t+Min'_t}{2}.\]  
  For any $\alpha\geq 3$, it follows that $\frac{(\alpha+3)}{2}\leq\alpha$. 
  Hence, $L_t(u)\leq \alpha\cdot Min'_t$.
  From the non-decreasing property, we have $L_t(u)\leq \alpha\cdot Min_t$.
  
  We are left to consider the second case when the {\minbalance} steps
  are not invoked.  From the algorithm, the load of $u$ in this case
  is not over $\alpha\cdot Min'_t$.  Because the minimum load never
  decreases, we have $L_t(u)\leq \alpha\cdot Min_t$.
\end{proof}

Note that the {\minbalance} procedure ``sees'' the load of the newly inserted node
and the minimum load, while it ignores the load of node $z$, the
previous neighbor of the minimum-loaded node. We define the
{\em min-transfer} event, i.e., we say that a min-transfer event occurs
on node $z$, when the minimum-loaded node transfers its load to $z$,
its lightly-loaded neighbor. Most of our analysis deals with the load
of nodes suffered from this kind of transfer.

Consider a sequence of min-transfer events occurring on node $z$.
Let $t_i$ represent the time after the $i$-th min-transfer event occurs
on $z$. Note that before a min-transfer event occurs on $z$, an insertion may
occur on it.  Next lemma shows that the load of $z$ has a good ratio
in this case.

\begin{lemma}
  \label{lem:insert-between-min-transfer}
Suppose that an insertion occurs on node  $z$ at some time $t$
right before the $i$-th min-transfer event occurs on $z$. Then
$L_{t_{i}}(z)\leq(\alpha + 1)\cdot Min_{t_{i}}$.
\end{lemma}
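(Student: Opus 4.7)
The plan is to track the load of $z$ across the two events that matter — the insertion at time $t$ and the $i$-th min-transfer at time $t_i$ — and bound each contribution using the lemmas already established. These two events affect $z$ for quite different reasons (one is an external write, one is an internal rebalancing transfer), so it is natural to handle them separately and then combine.

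For the insertion at $t$, I would apply Lemma~\ref{lem:insert-property} with $u = z$ to obtain $L_t(z) \le \alpha \cdot Min_t$; this bound holds regardless of whether the insertion on $z$ itself triggered the {\minbalance} steps, since Lemma~\ref{lem:insert-property} covers both sub-cases. Reading ``right before the $i$-th min-transfer event occurs on $z$'' as saying that the insertion at $t$ is the last event to change $L(z)$ before $t_i$, nothing modifies $L(z)$ in the open interval $(t, t_i)$, so $L'_{t_i}(z) = L_t(z)$. Lemma~\ref{lem:minload_not_decrease} then lets me replace $Min_t$ by $Min'_{t_i}$, giving $L'_{t_i}(z) \le \alpha \cdot Min'_{t_i}$.

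For the min-transfer at $t_i$, the current minimum-loaded node $v$ dumps its entire load, which is exactly $Min'_{t_i}$, onto $z$. Hence $L_{t_i}(z) = L'_{t_i}(z) + Min'_{t_i} \le (\alpha+1)\cdot Min'_{t_i}$, and one more invocation of Lemma~\ref{lem:minload_not_decrease} across $t_i$ itself yields $Min'_{t_i} \le Min_{t_i}$, completing the bound $L_{t_i}(z) \le (\alpha+1)\cdot Min_{t_i}$. The only subtle point is the step $L'_{t_i}(z) = L_t(z)$: one should check that $z$ cannot itself become the minimum node and shed its load inside $(t, t_i)$, since such an event would also change $L(z)$ without being a min-transfer on $z$. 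If it did occur, however, it would be the event ``right before'' $t_i$ in place of the insertion, so the hypothesis of the lemma already precludes it, and the rest of the argument is a direct composition of Lemmas~\ref{lem:insert-property} and~\ref{lem:minload_not_decrease}.
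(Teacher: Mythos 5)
Your proof is correct and follows essentially the same route as the paper: apply Lemma~\ref{lem:insert-property} to bound $L_t(z)\le\alpha\cdot Min_t$, add the transferred load $Min'_{t_i}$ at the min-transfer event, and invoke the non-decreasing property of the minimum load (Lemma~\ref{lem:minload_not_decrease}) to conclude $L_{t_i}(z)\le(\alpha+1)\cdot Min_{t_i}$. Your extra care about no intervening event changing $L(z)$ in $(t,t_i)$ is a reasonable reading of ``right before'' and matches the paper's implicit assumption.
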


\begin{proof}
  From Lemma~\ref{lem:insert-property}, after an insertion occurs
  on $z$ at some time $t$, $L_{t}(z)\leq\alpha\cdot Min_{t}$. The load of $z$
  increases again when the $i$-th min-transfer event occurs  at
  time $t_i$.  Thus,
  \[
  L_{t_{i}}(z)=L_t(z)+ Min'_{t_{i}} \leq \alpha\cdot Min_{t} +
  Min'_{t_{i}}.
  \]
  Therefore, we have that $L_{t_{i}}(z) \leq (\alpha+1)\cdot
  Min_{t_i}$ because of the non-decreasing property of the minimum load.
\end{proof}

As the result of Lemma~\ref{lem:insert-between-min-transfer}, later
when dealing with the load of node $z$, we only have to consider the case
that no insertion occurs on $z$ before a min-transfer event occurs
on it.

In our analysis, we categorize a min-transfer event into two types:
the {\em left-transfer} event and the {\em right-transfer} event. The
left-transfer event on $z$ is a min-transfer event that $z$ gets the
keys from the node, which is to the left of $z$ (see
Figure~\ref{fig:1} (a)). The right-transfer event on $z$ is
a min-transfer event that $z$ gets the keys from the node, which is
to the right of $z$ (see Figure~\ref{fig:1} (b)).

\begin{figure}
  \centering \includegraphics[width=0.35\textwidth]{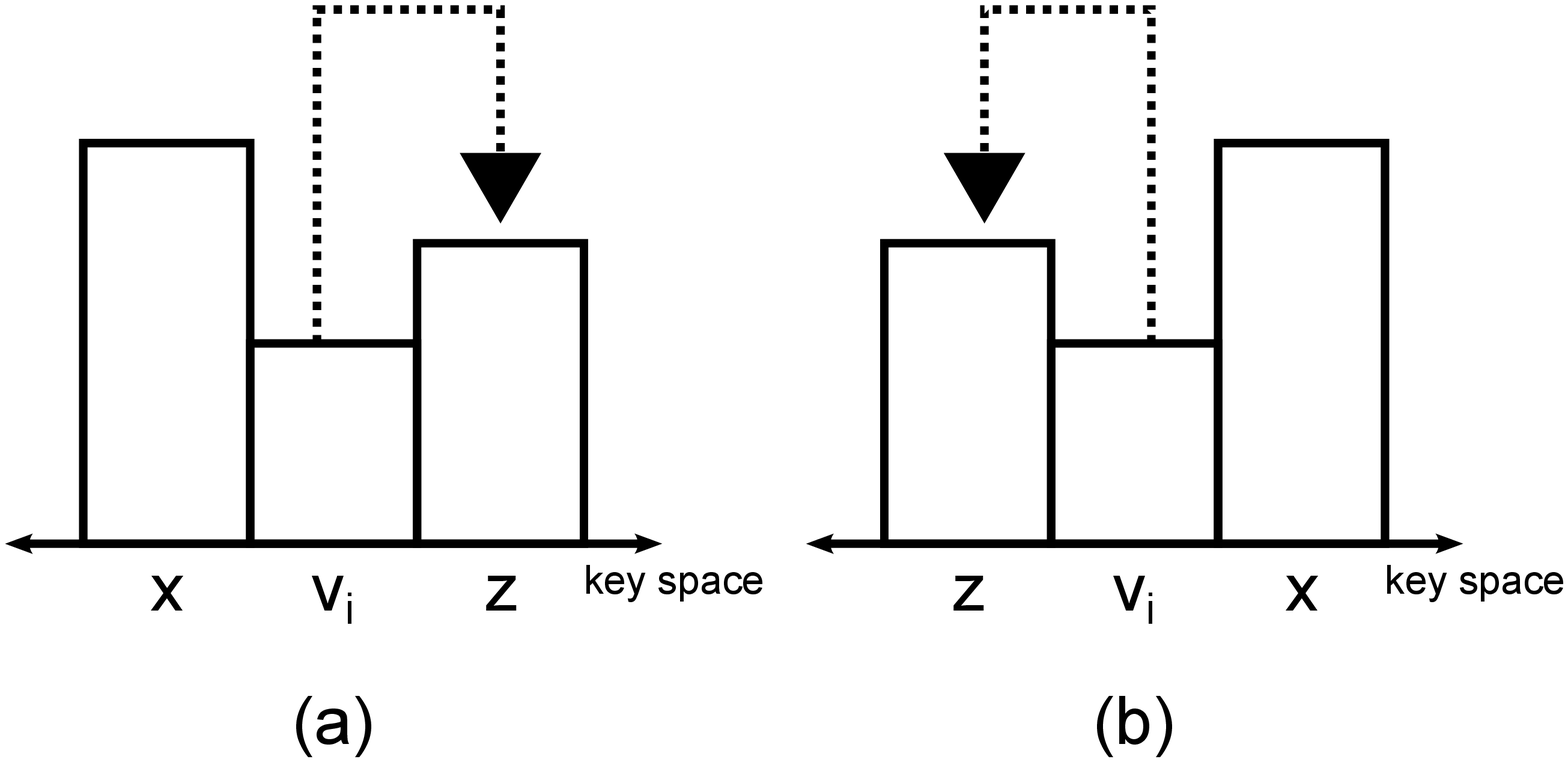}
  \caption{(a) The left-transfer event on z. (b) The right-transfer
    event on z.}
  \label{fig:1}
\end{figure}

When a new min-transfer event occurs on $z$, there are two
situations, i.e., the new min-transfer event is the same type as the
previous min-transfer event and the new min-transfer event is the
other type.

The next lemma bounds the load of $z$ when $l$ min-transfer events of
the same type occur on $z$ (probably not one after another).  Note
that, it is straightforward to show that the bound of $(\alpha+l+2)$
but we need a better bound.

\begin{lemma}
  \label{lem:genneralizemintransfer}
  Suppose $\alpha \geq 1+\sqrt{5}\approx 3.237$ and $l\geq
  0$. Consider the $i$-th, $(i+1)$-th, ..., $(i+l+1)$-th min-transfer
  events on $z$. If the $i$-th and  $(i+l+1)$-th
  min-transfer events are of the same type, while the $(i+1)$-th,
  $(i+2)$-th, ..., $(i+l)$-th min-transfer events are of type
  different from that of the $i$-th and  $(i+l+1)$-th events, then
  after the $(i+l+1)$-th min-transfer event, $L_{t_{i+l+1}}(z)\leq
  (\alpha+l+1)\cdot Min_{t_{i+l+1}}$.
\end{lemma}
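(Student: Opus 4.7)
The plan is to decompose the load of $z$ at time $t_{i+l+1}$ into the load at time $t_i$ plus the contributions of the subsequent $l+1$ min-transfer events, and to bound each piece separately. By the symmetry between left-transfer and right-transfer, I would assume without loss of generality that the $i$-th and $(i+l+1)$-th events are both left-transfer events, and by the reduction following Lemma~\ref{lem:insert-between-min-transfer}, I may assume no insertion occurs on $z$ throughout $[t_i,t_{i+l+1}]$. Then $z$'s load changes only through min-transfer events, so writing $m_j := Min'_{t_j}$ for the load received at the $j$-th event on $z$,
\[
L_{t_{i+l+1}}(z) \;=\; L_{t_i}(z) + \sum_{j=i+1}^{i+l+1} m_j \;\leq\; L_{t_i}(z) + (l+1)\cdot Min_{t_{i+l+1}},
\]
where the inequality uses the non-decreasing property of $Min$ from Lemma~\ref{lem:minload_not_decrease}. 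It therefore suffices to prove $L_{t_i}(z) \leq \alpha\cdot Min_{t_{i+l+1}}$.

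For this bound I would track the identity of $z$'s left-neighbor throughout $[t_i,t_{i+l+1}]$ and split into two cases. In the easy case, $z$'s left-neighbor is unchanged, so it is still the node $w_i$ that became the left-neighbor right after the $i$-th event. Since $z$ was the lightly-loaded neighbor of $v_i$ just before $t_i$, the heavier neighbor $w_i$ has load at least $L'_{t_i}(z) = L_{t_i}(z) - m_i$; as $w_i$'s load cannot decrease before $w_i$ itself becomes the min at $t_{i+l+1}$, this gives $m_{i+l+1} \geq L_{t_i}(z)-m_i$, hence $L_{t_i}(z) \leq 2\,m_{i+l+1}$. In the hard case, $w_i$ has moved away before $t_{i+l+1}$; the no-intermediate-left-transfer assumption forces every such relocation of $z$'s left-neighbor to occur when the current one itself becomes the min and transfers its entire load to its non-$z$ neighbor. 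I would iterate this observation to show that each change at least doubles the minimum load: the new left-neighbor absorbs a copy of the current min while already carrying at least the min itself, and the minimum never decreases afterwards. After $k\geq 1$ such changes this yields $m_{i+l+1} \geq 2^k\,m_i$, and combined with the outer-induction invariant $L'_{t_i}(z) \leq (\alpha+2)\,m_i$ it gives
\[
L_{t_i}(z) \;\leq\; (\alpha+3)\,m_i \;\leq\; \frac{\alpha+3}{2}\,m_{i+l+1} \;\leq\; \alpha\,m_{i+l+1},
\]
where the last step uses $\alpha\geq 3$, which is implied by $\alpha\geq 1+\sqrt{5}$.

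Combining these bounds produces $L_{t_{i+l+1}}(z) \leq (\alpha+l+1)\cdot Min_{t_{i+l+1}}$, as required. I expect the principal obstacle to lie in the iterated hard case, where one must carefully verify that the doubling-of-minimum argument survives in the presence of the intermediate right-transfer events (which raise $z$'s load but leave its left-neighbor intact) and of any insertions or \nbradj-style transfers to the succession of left-neighbors; making these bookkeeping steps precise, and checking that each successive left-neighbor really was the lightly-loaded one at the moment its predecessor departed, is where the proof demands the most care. The constant $\alpha\geq 1+\sqrt{5}$ appears to be tuned precisely so that $\frac{\alpha+3}{2}\leq \alpha$ holds with a margin consistent with the other lemmas driving the outer induction.
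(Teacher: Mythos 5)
Your telescoping step and the reduction to bounding $L_{t_i}(z)$ by $\alpha\cdot Min_{t_{i+l+1}}$ match the paper's Claim~1 exactly, and your ``easy case'' is essentially the paper's Case~1 (the heavy neighbor's load never decreases before it becomes the minimum, giving the factor $2$, resp.\ $l+3$). But there is a genuine gap: you assert in the easy case that the neighbor's load ``cannot decrease before it becomes the min,'' and in the hard case you only allow the neighbor to change by relocating when it \emph{is} the minimum. Neither branch covers the decisive scenario, namely that the heavy neighbor $x$ of $v_i$ receives insertions, crosses the threshold, and itself invokes the {\minbalance} steps as the overloaded node $u$: its load is then \emph{halved while it stays in place} (it is the minimum node $v$, not $u$, that relocates in {\minbalance}), so it can later become the minimum with a load far below what a monotonicity argument predicts. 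This is the paper's Case~2, and it is where the constant actually comes from: after the last such halving at time $t'$ one has $L_{t'}(x)\geq\lceil\alpha\,Min_{t'}/2\rceil$, hence $Min'_{t_{i+l+1}}\geq\alpha\,Min_{t'}/2$, and combining with the invariant $L_{t_i}(z)\leq(\alpha+2)Min_{t'}$ yields the ratio $(l+1)+2(\alpha+2)/\alpha$, which is at most $(l+1)+\alpha$ precisely when $\alpha^2-2\alpha-4\geq 0$, i.e.\ $\alpha\geq 1+\sqrt{5}$.

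Your own explanation of the threshold --- that $\alpha\geq 1+\sqrt{5}$ is tuned so that $(\alpha+3)/2\leq\alpha$ --- is the tell: that inequality only needs $\alpha\geq 3$, so your argument, if it closed, would prove the lemma for a smaller $\alpha$ than the paper claims to need, which signals that the binding case is missing. Your iterated-doubling analysis of successive left-neighbors is also not how the paper proceeds (the paper tracks the single node $x$ and case-splits on whether it ever calls {\minbalance}), and it inherits the same hole, since a left-neighbor that halves its load in place undergoes no ``relocation'' and so never enters your chain. To repair the proof you must add the halving case and carry out the $2(\alpha+2)/\alpha\leq\alpha$ computation; the rest of your outline then goes through.
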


\begin{proof}
  Without loss of generality, we assume that the $i$-th and 
  $(i+l+1)$-th min-transfer events occurring on $z$ are the
  right-transfer event; and the $(i+1)$-th, $(i+2)$-th, ...,
  $(i+l)$-th are the left-transfer event.  Let $v_{i}$ be the
  minimum-loaded node at time $t_i$.

  We first deal with the case that there exists an insertion occurring
  on $z$ between the $i$-th and  $(i+l+1)$-th min-transfer events.
  Assume that the latest insertion occurs on $z$ right before the
  $k$-th min-transfer event where $i< k\leq i+l+1$. From
  Lemma~\ref{lem:insert-between-min-transfer}, after the $k$-th
  min-transfer event, the load of $z$ is not over $\alpha +1$ times 
  the minimum load at time $t_k$, i.e.,
	
  \[
  L_{t_k}(z)\leq (\alpha+1)\cdot Min_{t_k}.
  \]

  Since no insertion occurs on $z$ after the $k$-th min-transfer
  event, we have 
	
  \[
  L_{t_{i+l+1}}(z)\leq (\alpha+1)\cdot Min_{t_k} + Min_{t_{k+1}}
  +\cdots + Min_{t_{i+l+1}}.
  \]
	
  Because the minimum load never decreases, it follows that
	
  \[
  L_{t_{i+l+1}}(z) \leq (\alpha+1)\cdot Min_{t_{i+l+1}} + (i+l-k)\cdot
  Min_{t_{i+l+1}},
  \]

  and finally we have $L_{t_{i+l+1}}(z)\leq (\alpha+l+1)\cdot
  Min_{i+l+1}$, because $k> i$.
		
  We are left to consider the case that there is no insertion into $z$
  between the $i$-th and  $(i+l+1)$-th min-transfer events.  We
  make the first claim.
    
  \begin{claim}\label{clamin:1}
  \[
  L_{t_{i+l+1}}(z) \leq L_{t_{i}}(z) + (l+1)\cdot Min'_{t_{i+l+1}}.
  \]
\end{claim}
\begin{proof}
  To prove the claim, we note that for any $i< j\leq i+l+1$, the load
  of $z$ right before the $j$-th min-transfer event is equal to its
  load after the $(j-1)$-th min-transfer event, i.e.,

  \begin{equation}\label{eqn:l'_l}
    L'_{t_j}(z)=L_{t_{j-1}}(z).
  \end{equation}  

  Also, for any $i< j\leq i+l+1$, the load of $z$ after the
  $j$-th min-transfer event increases by the minimum load before time
  $t_j$, $Min'_{t_j}$.  Thus,

  \begin{equation} \label{eqn:def_load_after_mintransfer}
    L_{t_j}(z)=L'_{t_j}(z)+Min'_{t_j},
  \end{equation}  

  and from Eq.~(\ref{eqn:l'_l}),

  \begin{equation} \label{eqn:tele}
    L_{t_j}(z)=L_{t_{j-1}}(z)+Min'_{t_j}.
  \end{equation}

  Telescoping, we have
  \[
  L_{t_{i+l+1}}(z) \leq L_{t_{i}}(z) +Min'_{t_{i+1}}+Min'_{t_{i+2}}+
  \cdots+ Min'_{t_{i+l+1}} \leq L_{t_{i}}(z) + (l+1)\cdot
  Min'_{t_{i+l+1}}.
  \]
  The last step is from the non-decreasing property of the minimum
  load; and the claim follows.
\end{proof}

\begin{figure}
    \centering \includegraphics[width=0.50\textwidth]{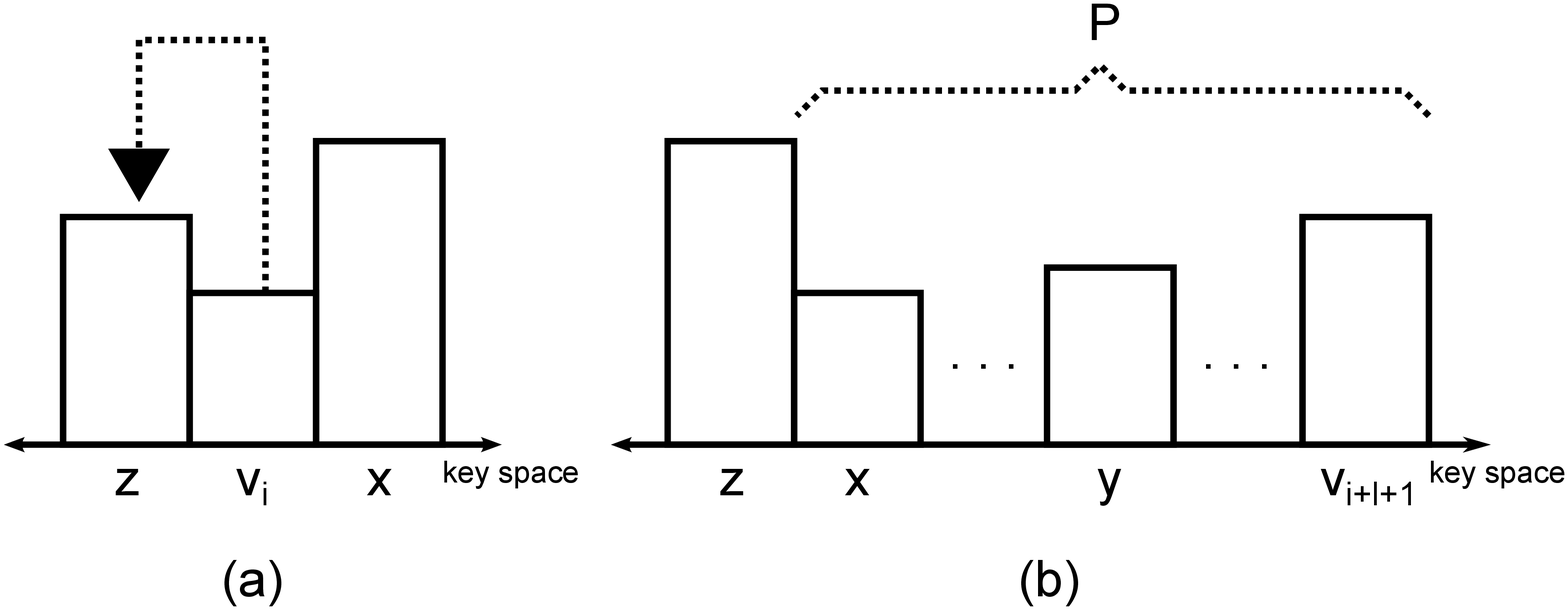}
    \caption{(a) The $i$-th min-transfer event on z. (b) The node set
      $P$ between $x$ and $v_{i+l+1}$.}
    \label{fig:2}
  \end{figure}

  Now, consider the $i$-th min-transfer event on $z$. Recall that the 
  $i$-th min-transfer event is a right-transfer event.  Let $x$ be the node on
  the right of $v_i$ right before the $i$-th min-transfer event occurs
  on $z$ (see Figure~\ref{fig:2} (a)).  Note that because the
  $(i+l+1)$-th min-transfer event is a right-transfer event, $x$
  must become the minimum-loaded node at some point after $t_i$. Let $t^*$ be the
  time that $x$ becomes the minimum-loaded node  after $t_i$.
  Node $x$ plays a crucial role in our analysis.
  
  There are two cases depending on $x$ calls {\minbalance} steps or not.
  
  \textbf{Case 1:} $x$ does not call {\minbalance} steps between time
  $t_i$ and $t_{i+l+1}$.
  
  In the $i$-th min-transfer event occurring on $z$, $v_i$ transfers
  its entire load to $z$ at time $t_i$, but not $x$. That means the
  load of $z$ right before the $i$-th min-transfer event is not over
  the load of $x$, i.e., $L'_{t_i}(z)\leq L'_{t_i}(x)$. Then, after
  time $t_i$,

  \begin{equation}\label{eqn:1.5}
    L_{t_i}(z) = L'_{t_i}(z)+Min'_{t_i}\leq L'_{t_i}(x)+Min'_{t_i},
  \end{equation}

  and from the Claim~\ref{clamin:1},
  \begin{equation}\label{eqn:1.6}
    L_{t_{i+l+1}}(z) \leq L'_{t_i}(x)+Min'_{t_i} + (l+1) \cdot
    Min'_{t_{i+l+1}}.
  \end{equation}
  
  After time $t_i$, $x$ becomes the minimum-loaded node at  time $t^*$.
  Note that $x$ does not call the  {\minbalance} steps. Then, the
  event that may occur on $x$ after time $t_i$ is an insertion or a
  min-transfer event. Thus, $x$'s load does not decrease. We have
  $L'_{t_i}(x)\leq Min_{t^*}$. Then,
  
  \[L_{t_{i+l+1}}(z) \leq Min_{t^*}+Min'_{t_i} + (l+1)\cdot
  Min'_{t_{i+l+1}}. \]
  
  From the non-decreasing property of the minimum load, we have
  $L_{t_{i+l+1}}(z) \leq (l+3)\cdot Min_{t_{i+l+1}}$ and when $\alpha
  \geq 2$, it follows that $L_{t_{i+l+1}}(z)\leq (\alpha+l+1)\cdot Min_{t_{i+l+1}}$.

  \textbf{Case 2:} $x$ calls {\minbalance} steps between time $t_i$
  and $t_{i+l+1}$.
  
  Let $t'$ be the latest time that $x$ calls  {\minbalance}
  steps. Since {\minbalance} steps are invoked, some insertion must occur
  on $x$. After that, $x$'s load is over $\alpha$ times  the minimum
  load at time $t'$. Then, the load of $x$ is divided into halves,
  i.e.,
  \[ L_{t'}(x)\geq \left\lceil \frac{\alpha
      Min_{t'}}{2}\right\rceil.\]
  
  Note that after time $t'$, $x$ does not call {\minbalance} steps
  and it becomes the minimum-loaded node at  time $t^*$. The event that
  may occur on $x$ after time $t'$ is an insertion or a min-transfer
  event. Thus, $x$'s load after time $t'$ does not decrease. Therefore,
  $L_{t'}(x)\leq L_{t^*}(x)$.  Moreover, we know that $L_{t^*}(x)\leq
  Min'_{t_{i+l+1}}$ by the non-decreasing property of the minimum
  load. Then,
  
  \begin{equation}\label{eqn:1.8}
    Min'_{t_{i+l+1}}\geq \left\lceil \frac{\alpha
        Min_{t'}}{2}\right\rceil.
  \end{equation}  
 
  From the invariant, the load of $z$ at time $t_i$ is not
  over $(\alpha+2)$ times  the minimum load at time $t_i$. We know
  that the minimum load cannot decrease; we have

  \begin{equation}\label{eqn:1.10}
    L_{t_i}(z)\leq (\alpha+2)\cdot Min_{t_i}\leq (\alpha+2)\cdot
    Min_{t'}.
  \end{equation}
  
    Consider the Claim~\ref{clamin:1}. We divide it by $Min_{i+l+1}$,
  i.e.,

  \begin{eqnarray*}
    \frac{L_{t_{i+l+1}}(z)}{Min_{i+l+1}} &\leq& \frac{L_{t_i}(z)+
      (l+1)\cdot Min'_{t_{i+l+1}}}{Min_{t_{i+l+1}}}.\\
  \end{eqnarray*}
  
  By the non-decreasing property of the minimum load,
  $Min'_{t_{i+l+1}}\leq Min_{t_{i+l+1}}$. We have
  \begin{eqnarray*}\label{eqn:L/M}
    \frac{L_{t_{i+l+1}}(z)}{Min_{i+l+1}} &\leq& \frac{L_{t_i}(z)+
      (l+1)\cdot Min'_{t_{i+l+1}}}{Min'_{t_{i+l+1}}}\\
    & = & (l+1)+\frac{L_{t_i}(z)}{Min'_{t_{i+l+1}}},
  \end{eqnarray*}
 	
  and from Eq.~(\ref{eqn:1.8}) and~(\ref{eqn:1.10}),

  \begin{eqnarray*}
    \frac{L_{t_{i+l+1}}(z)}{Min_{i+l+1}} & \leq &
    (l+1)+\frac{(\alpha+2)\cdot Min_{t'}}{\frac{\alpha
        Min_{t'}}{2}}\\
    &=& (l+1)+\frac{2(\alpha+2)}{\alpha}.              
  \end{eqnarray*}
  
  From the assumption that $\alpha\geq 1+\sqrt{5}$, it follows that
  $\frac{2(\alpha+2)}{\alpha}\leq\alpha$. Then, we have
  \[
  \frac{L_{t_{i+l+1}}(z)}{Min_{i+l+1}}\leq (\alpha+l+1).
  \]
  \end{proof}

Using previous lemmas, we can conclude the bound on the load of $z$
after a min-transfer event occurring on it.

\begin{lemma}
  \label{lem:min-transfer}
  Suppose $\alpha \geq 1+\sqrt{5} \approx 3.237$. After the $i$-th
  min-transfer event occurs on $z$, $ L_{t_i}(z)\leq(\alpha + 2)\cdot
  Min_{t_i}.  $
\end{lemma}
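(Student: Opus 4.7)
The plan is a case analysis on the activity on $z$ immediately preceding the $i$-th min-transfer event, after which I invoke one of the three preceding lemmas.

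\textbf{Case A (an insertion occurs on $z$ between the $(i-1)$-th and $i$-th min-transfer events).} Lemma~\ref{lem:insert-between-min-transfer} directly gives $L_{t_i}(z)\leq (\alpha+1)\cdot Min_{t_i}$, which is comfortably within the target bound $(\alpha+2)\cdot Min_{t_i}$.

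\textbf{Case B (no insertion on $z$ in that interval).} Here I would appeal to Lemma~\ref{lem:genneralizemintransfer} by choosing a bracketing pair of same-type min-transfer events that ends at the $i$-th event. Concretely, let $j<i$ be the most recent prior min-transfer event on $z$ whose type (left- or right-transfer) matches that of the $i$-th event. If $j=i-1$, I apply Lemma~\ref{lem:genneralizemintransfer} with $l=0$ to obtain $L_{t_i}(z)\leq(\alpha+1)\cdot Min_{t_i}$. If $j=i-2$, so that exactly one opposite-type event is sandwiched between, I apply the lemma with $l=1$ to obtain exactly $L_{t_i}(z)\leq(\alpha+2)\cdot Min_{t_i}$, matching the target.

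The main obstacle I foresee is the residual subcase in which two or more consecutive opposite-type events precede the $i$-th event, so that the naive application of Lemma~\ref{lem:genneralizemintransfer} would demand $l\geq 2$ and yield a bound exceeding $(\alpha+2)$. To dispose of this, I plan to exploit the structural fact that after each min-transfer to $z$ the transferring neighbor is relocated, so the neighbor on that side is a freshly-placed node at the next event; combined with the non-decreasing property of the minimum load established in Lemma~\ref{lem:minload_not_decrease}, this should either rule out the problematic bracketing or allow the $x$-based analysis used inside Lemma~\ref{lem:genneralizemintransfer} to be re-run on the relevant subsequence, restoring the $(\alpha+2)$ bound. If neither route closes the gap directly, an alternative is to strengthen the induction hypothesis to track the ratio between $L_{t_i}(z)$ and the minimum load in a way that tightens across opposite-type events, rather than loosening by $1$ per event.
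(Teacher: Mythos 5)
Your Case A and the $j=i-1$, $j=i-2$ branches of Case B match the paper, but the ``residual subcase'' you flag is a genuine gap, and none of your three proposed escape routes is the one that works. The paper never applies Lemma~\ref{lem:genneralizemintransfer} with $l\geq 2$, and it does not bracket back to the most recent same-type event as you propose. Instead it looks only at the two events immediately preceding the $i$-th. The point you are missing is this: if the $(i-1)$-th event has the opposite type from the $i$-th \emph{and} the $(i-2)$-th event has that same opposite type, then the pair $\{(i-2),(i-1)\}$ is itself a same-type pair, so Lemma~\ref{lem:genneralizemintransfer} with $l=0$ applies to \emph{it} and yields the stronger bound $L_{t_{i-1}}(z)\leq(\alpha+1)\cdot Min_{t_{i-1}}$ one step early; the $i$-th transfer then adds only $Min'_{t_i}\leq Min_{t_i}$, giving $(\alpha+2)\cdot Min_{t_i}$. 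Thus a long run of opposite-type events before the $i$-th is not the hard case at all --- it is the easy one. The only configuration that genuinely needs $l=1$ is when the $(i-2)$-th event has the \emph{same} type as the $i$-th (your $j=i-2$ branch), and that is exactly where the bound $(\alpha+2)$ is tight. Your suggested fixes (relocation of the transferring neighbor, re-running the $x$-based analysis on the subsequence, strengthening the induction) would at best reprove Lemma~\ref{lem:genneralizemintransfer} with larger $l$, which still gives only $(\alpha+l+1)$ and does not close the gap.

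Two smaller omissions: the paper organizes the whole argument as an induction on $i$ (the invariant at time $t_k$ is assumed when handling the $(k+1)$-th event, and Lemma~\ref{lem:genneralizemintransfer}'s proof uses that invariant), and you need the base case $i=1$ with no prior insertion, where $L'_{t_1}(z)=c_0=Min'_{t_1}$ so $L_{t_1}(z)=2\cdot Min_{t_1}\leq(\alpha+2)\cdot Min_{t_1}$. You also need the degenerate branch of Case B where the $(i-1)$-th event is of opposite type but there is no $(i-2)$-th event; there the base-case bound $L_{t_{i-1}}(z)\leq(\alpha+1)\cdot Min_{t_{i-1}}$ plus one more transfer again gives $(\alpha+2)$.
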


\begin{proof}
  We prove by induction on the number of min-transfer events occurring
  on $z$.
  
  For the base case, the first min-transfer event occurs on $z$.  If
  some insertion occurs on it before the min-transfer event, t
  from Lemma~\ref{lem:insert-between-min-transfer}, 
  $L_{t_1}(z)\leq(\alpha+1)\cdot Min_{t_1}$. We are left to consider the
  case that no insertion occurs on $z$ before the first min-transfer
  event.  The load of $z$ before the min-transfer event is equal to its
  load at the beginning of the system, $c_0$. Note that the minimum
  load at time $t'_1$ is also $c_0$. After the first
  min-transfer event, $L_{t_1}(z) = 2Min_{t_1}$. Thus, the load of $z$ 
   in both cases is not over $(\alpha+2)\cdot Min_{t_1}$.
    
  Assume that the load of any node at time $t_k$ is not over $(\alpha
  +2)\cdot Min_{t_k}$. Consider the $(k+1)$-th min-transfer event on
  $z$. There are two cases.

  \begin{figure}
    \centering \includegraphics[width=0.75\textwidth]{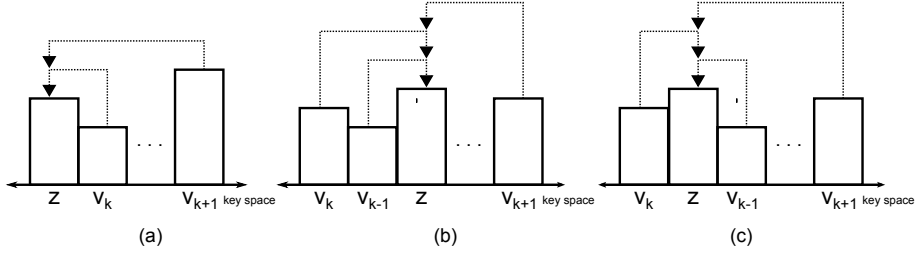}
    \caption{(a) the $k$-th and $(k+1)$-th min-transfer events are the same
      type. (b) the $(k-1)$-th and $k$-th min-transfer events are the same
      type but the $(k+1)$-th min-transfer event is a different type.
      (c) the $(k-1)$-th and $(k+1)$-th min-transfer events are the same
      type but the $k$-th min-transfer event is a different type.}
    \label{fig:3}
  \end{figure}

  \textbf{Case 1:} The $(k+1)$-th and $k$-th min-transfer events
  are the same type (see Figure~\ref{fig:3} (a)).  From
  Lemma~\ref{lem:genneralizemintransfer}, when set $l=0$, after the
  $(k+1)$-th min-transfer event, $L_{t_{k+1}}(z)\leq(\alpha+1)\cdot Min_{t_{k+1}}$. 
  Thus, the lemma holds in this case.

  \textbf{Case 2:} The $(k+1)$-th min-transfer event is a different type
  from the $k$-th min-transfer event. If there is any insertion into
  $z$ between time $t_{k}$ and $t_{k+1}$, from
  Lemma~\ref{lem:insert-between-min-transfer}, 
  $L_{t_{k+1}}(z)\leq(\alpha+1)\cdot Min_{t_{k+1}}$.  We consider the $(k-1)$-th
  min-transfer event.  There are three sub cases.

  \textbf{Sub case 2.1:} There is no the $(k-1)$-th min-transfer event;
  hence, the $k$-th min-transfer event is the first min-transfer event. 
  We can bound the load of $z$ after $t_k$ like the base case,
  i.e., $L_{t_{k}}(z) \leq (\alpha+1)\cdot Min_{t_{k}}$. The load of $z$
  increases again when the $(k+1)$-th min-transfer event occurs on
  it at time $t_{k+1}$, i.e.,
  \[ L_{t_{k+1}}(z)= L_{t_{k}}(z) +  Min'_{t_{k+1}} \leq (\alpha+1)\cdot Min_{t_{k}}+ Min'_{t_{k+1}}.\]
  From the non-decreasing property of the minimum load, we have that $
  L_{t_{k+1}}(z) \leq (\alpha+2)\cdot Min_{t_{k+1}}$. Thus, the lemma
  holds in this sub case.

  \textbf{Sub case 2.2:} The $(k-1)$-th min-transfer event is the same
  type as the $k$-th min-transfer event (see Figure~\ref{fig:3} (b)).
  From Lemma~\ref{lem:genneralizemintransfer}, when set $l=0$, after the
  $k$-th min-transfer event, $L_{t_{k}}(z)\leq(\alpha+1)\cdot Min_{t_{k}}$. The load of  $z$ increases again by the
  $(k+1)$-th min-transfer event, i.e.,
  \[ L_{t_{k+1}}(z)=  L_{t_{k}}(z) + Min'_{t_{k+1}}\leq (\alpha+1)\cdot Min_{t_{k}}+ Min'_{t_{k+1}}.\]
   From the non-decreasing property of the minimum load,
  we have $L_{t_{k+1}}(z)\leq (\alpha+2)\cdot Min_{t_{k+1}}$.  Thus,
  the lemma holds in this sub case.

  \textbf{Sub case 2.3:} The $(k-1)$-th min-transfer event is a
  different type from the $k$-th min-transfer event, i.e., the
  $(k-1)$-th min-transfer event is the same type as the $(k+1)$-th
  min-transfer event (see Figure~\ref{fig:3} (c)). From
  Lemma~\ref{lem:genneralizemintransfer}, when set $l=1$,  
  after the $(k+1)$-th min-transfer event, $L_{t_{k+1}}(z)\leq(\alpha +
  2)\cdot Min_{t_{k+1}}$ and thus, the lemma holds in this sub case.
\end{proof}

We are ready to prove the invariant. Note that a load of any node
may change from insertion or min-transfer event.  From
Lemma~\ref{lem:insert-property} and~\ref{lem:min-transfer}, we can
conclude the invariant, which guarantees the bound of any load in the
system.

\begin{theorem}\label{thm:maxload}
  Consider the insert-only case. Suppose $\alpha \geq 1+\sqrt{5}
  \approx 3.237$. For any node $u \in V$, after any event at time
  $t$, $ L_t(u)\leq(\alpha+2)\cdot Min_t.  $
\end{theorem}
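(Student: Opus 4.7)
The plan is to prove the invariant by induction on the sequence of events (insertions and the associated load-balancing steps), dispatching to the lemmas already proved for the two kinds of nodes whose loads actually change, and using the non-decreasing property of the minimum load (Lemma~\ref{lem:minload_not_decrease}) for every other node.

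For the base case, the system starts with every node at the constant load $c_0$, so the imbalance ratio is $1 \leq \alpha + 2$. For the inductive step, I would fix a time $t$ at which some event completes and consider an arbitrary node $u \in V$. If $u$ is not touched by the event, then $L_t(u) = L'_t(u)$; by the inductive hypothesis $L'_t(u) \leq (\alpha+2) \cdot Min'_t$, and since $Min'_t \leq Min_t$ by Lemma~\ref{lem:minload_not_decrease}, the invariant is preserved for such $u$.

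Only three nodes can possibly have their loads changed by an event: the node at which the insertion occurs (call it $u$), and, if {\minbalance} steps fire, the minimum-loaded node $v$ and its lightly-loaded neighbor $z$. For the inserted node, Lemma~\ref{lem:insert-property} gives $L_t(u) \leq \alpha \cdot Min_t \leq (\alpha+2) \cdot Min_t$ directly (using $\alpha \geq 3$, which is implied by $\alpha \geq 1+\sqrt{5}$). For the node $z$ on the receiving end of a min-transfer event, Lemma~\ref{lem:min-transfer} gives exactly $L_t(z) \leq (\alpha+2) \cdot Min_t$. Finally, for $v$: after the {\minbalance} steps, $L_t(v) = \lfloor L'_t(u)/2 \rfloor \leq L_t(u) \leq \alpha \cdot Min_t \leq (\alpha+2) \cdot Min_t$, again by Lemma~\ref{lem:insert-property}.

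I don't expect any real obstacle here; the theorem is essentially the conjunction of the earlier lemmas, and the only thing to be careful about is that Lemma~\ref{lem:min-transfer} was itself proved by an induction that relied on the invariant holding at all strictly earlier times (see its use in Case 2 of Lemma~\ref{lem:genneralizemintransfer} through Eq.~(\ref{eqn:1.10})). So the cleanest presentation is a single induction on the event index whose inductive hypothesis is the invariant itself; the lemmas are then invoked as black boxes inside the step, with the understanding that their internal inductions are compatible because they only appeal to the invariant at times strictly before $t$.
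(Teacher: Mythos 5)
Your proposal is correct and follows essentially the same route as the paper, which simply observes that a node's load can only change via an insertion or a min-transfer event and then cites Lemma~\ref{lem:insert-property} and Lemma~\ref{lem:min-transfer}; you merely make explicit the details the paper leaves implicit (the untouched nodes, the node $v$, and the fact that the whole argument is really one induction on events since the lemmas themselves appeal to the invariant at earlier times).
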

	
In our load-balancing algorithm, we want to minimize the number of
moving keys and to guarantee a constant imbalance ratio.  Imbalance
ratio, $\sigma$, is defined as the ratio of the maximum to minimum
load in the system. We show that the imbalance ratio from our
algorithm in the insert-only case is a constant.

\begin{corollary}\label{thm:Imbalance_ratio}
  Consider the insert-only case. Suppose $\alpha \geq 1+\sqrt{5}
  \approx 3.237$. The imbalance ratio of the algorithm is a constant.
\end{corollary}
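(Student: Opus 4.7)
The plan is to derive this corollary as an immediate consequence of Theorem~\ref{thm:maxload}. Since Theorem~\ref{thm:maxload} asserts that for every node $u \in V$ and every time $t$, $L_t(u) \leq (\alpha+2)\cdot Min_t$, I can take the maximum over all nodes $u \in V$ on the left-hand side to obtain $Max_t \leq (\alpha+2)\cdot Min_t$. Dividing both sides by $Min_t$ (which is at least $c_0 > 0$ by the initial load assumption and the monotonicity in Lemma~\ref{lem:minload_not_decrease}) yields $\sigma = Max_t / Min_t \leq \alpha + 2$.

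Then I would observe that, under the hypothesis $\alpha \geq 1+\sqrt{5}$, choosing $\alpha = 1+\sqrt{5}$ gives the concrete bound $\sigma \leq 3+\sqrt{5} \approx 5.236$. Since $\alpha$ is a fixed constant of the algorithm (not depending on the number of nodes, operations, or the current load distribution), the resulting bound $\alpha + 2$ is a constant, matching the definition of imbalance ratio given in Section~\ref{sect:model}. There is essentially no obstacle here: the hard analytical work is already absorbed into the induction underlying Lemma~\ref{lem:min-transfer} and Theorem~\ref{thm:maxload}; this corollary is simply a repackaging of that invariant in the language of the imbalance ratio.
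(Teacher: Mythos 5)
Your proposal is correct and matches the paper's own (one-line) proof: the corollary follows directly from Theorem~\ref{thm:maxload} by taking the maximum over nodes, giving $Max_t \leq (\alpha+2)\cdot Min_t$ and hence a constant imbalance ratio of $\alpha+2$. Your extra remark about instantiating $\alpha = 1+\sqrt{5}$ is fine for this corollary, though note the paper's headline constant $7.464$ comes from the larger $\alpha > 2(1+\sqrt{3})$ needed later for the amortized cost bound.
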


\begin{proof}
  Prove directly from Theorem~\ref{thm:maxload}.
\end{proof}

\subsection{Cost of the algorithm in insert-only case}
\label{sect:imbalanceratioproof}

Our algorithm uses two operations, i.e., insert and {\minbalance} operations.
 We consider the cost of each operation. Moving a single
key from one node to another is counted as a unit cost.  We follow the
analysis in~\cite{GanesanBGM04-vldb} based on the potential function
method, and use the same potential function.

\begin{theorem}\label{thm:amortized_cost_insertonly}
   Suppose that $\alpha>2(1+\sqrt{3})\approx 5.464$. The amortized costs of our algorithm in the insert-only case are
  constant.
\end{theorem}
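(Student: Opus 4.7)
The plan is to follow the potential-function analysis used by Ganesan et al., with a potential of the form $\Phi = c \sum_u \max(0, L(u) - \gamma \cdot Min_t)$, where $c$ and $\gamma$ are constants to be tuned and $\gamma$ is chosen roughly near $\alpha/2$ so that a node whose load sits at the post-{\minbalance} value (about $L(u)/2 \approx \alpha Min_t / 2$) contributes essentially zero to $\Phi$. The amortized cost of an operation will be its actual cost plus the change $\Delta\Phi$ that it induces, and the goal is to show that this sum is bounded by a constant.

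I would first handle a plain insertion: the actual cost is $1$, one node's load increases by one, and by Lemma~\ref{lem:minload_not_decrease} the minimum load cannot decrease; hence $\Delta\Phi \leq c$ and the amortized cost is at most $1 + c$. The interesting case is a {\minbalance} step on $u$ with minimum node $v$ and its lightly-loaded neighbor $z$, whose actual cost is $Min'_t + \lceil L'_t(u)/2 \rceil$. I would account for $\Delta\Phi$ node by node: $u$'s contribution drops from roughly $c(L'_t(u) - \gamma Min'_t)$ to about $c(\lceil L'_t(u)/2 \rceil - \gamma Min_t)$, a sizeable negative; $v$'s contribution was zero (since $L'_t(v) = Min'_t \leq \gamma Min'_t$) and becomes at most $c(\lfloor L'_t(u)/2 \rfloor - \gamma Min_t)$; and $z$'s contribution can grow by at most $c \cdot Min'_t$. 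Summing these, and using both the trigger condition $L'_t(u) > \alpha Min'_t$ and the invariant $L'_t(u) \leq (\alpha + 2) Min'_t$ from Theorem~\ref{thm:maxload}, the total amortized cost reduces to an expression that is bounded whenever a certain quadratic inequality in $\alpha$ is satisfied; I expect the threshold $\alpha = 2(1 + \sqrt{3})$ to arise precisely as the root of that inequality.

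The main obstacle will be choosing $c$ and $\gamma$ jointly so that a single inequality simultaneously covers both the $Min'_t$ cost (which must be paid out of the drop at $u$ minus $z$'s new credit) and the $L'_t(u)/2$ cost (the bulk of the data movement). Raising $\gamma$ enlarges the negative term at $u$ but shrinks the ``free'' zero-credit at $v$ and inflates the per-insertion charge, so the two constants have to be tuned against each other. A secondary technicality is that $Min_t$ may shift during the {\minbalance} step (it could stay at $Min'_t$ if some other node happens to tie the old minimum), so in the bookkeeping I would invoke $Min_t \geq Min'_t$ from Lemma~\ref{lem:minload_not_decrease} to argue that substituting $Min_t$ for $Min'_t$ in the potential only helps. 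Working out the algebra then yields the stated threshold $\alpha > 2(1 + \sqrt{3})$.
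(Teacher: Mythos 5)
Your plan is workable, but it is not the paper's argument, and it is not ``the potential-function analysis used by Ganesan et al.'' as you claim: the paper (following Ganesan et al.) uses the quadratic normalized potential $\Phi = c\bigl(\sum_{i}L_t(N_i)^2\bigr)/\bar{L}_t$, whereas yours is a linear ``excess over $\gamma\cdot Min_t$'' potential. Your node-by-node accounting for a {\minbalance} step is sound: before the step $u$ contributes $c(L'_t(u)-\gamma Min'_t)$ and $v$ contributes $0$; afterwards $u$ and $v$ together contribute at most $c(L'_t(u)-2\gamma Min'_t)$ and $z$ gains at most $c\cdot Min'_t$, so the net drop is at least $c(\gamma-1)Min'_t$, while the invariant of Theorem~\ref{thm:maxload} caps the data moved at $Min'_t+\lceil L'_t(u)/2\rceil = O(Min'_t)$; taking $1<\gamma$ and $c$ large closes the argument, and the monotonicity of $Min_t$ (Lemma~\ref{lem:minload_not_decrease}) justifies replacing $Min_t$ by $Min'_t$ as you say. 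One inaccuracy: the threshold $\alpha>2(1+\sqrt{3})$ will \emph{not} emerge as the root of a quadratic inequality in your scheme. In the paper it comes from the quadratic potential, where the drop at a {\minbalance} step is $c L_t(N_j)^2\bigl(\tfrac12-\tfrac{2(\alpha+2)}{\alpha^2}\bigr)/\bar{L}_t$ and positivity forces $\alpha^2-4\alpha-8>0$. With your linear potential the only constraints on $\alpha$ are those needed for the invariant and for Lemma~\ref{lem:minload_not_decrease}, so you would in fact prove the theorem under a weaker hypothesis --- which is fine, since the stated bound is only a sufficient condition. The trade-off between the two routes: the quadratic potential carries over verbatim to the general case with deletions (Theorem~\ref{thm:amortized_cost}), whereas your potential is anchored to $Min_t$ and to its monotonicity, which fails once deletions are allowed, so your argument would need to be re-engineered (e.g., phase by phase) to extend beyond the insert-only setting.
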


\begin{proof}
  Let $\bar{L}_t$ denote the average load at time $t$ and let $N_i$ be
  the $i$-th node that manages a range $[R_{i-1}, R_i)$. We consider the
  same potential function as~\cite{GanesanBGM04-vldb}, i.e., $\Phi =
  \frac{c\left(\sum^{n}_{i=1}(L_t(N_i))^2\right)}{\bar{L}_t}$, where $c$ is a
  constant to be specific later. We show that the gain in potential
  when an insertion occurs is at most a constant and the drop in
  potential when a {\minbalance} operation occurs pays for the cost
  of the operation.

  \textbf{Insertion:} Consider an insertion of a key occurring on node
  $N_j$ at time $t$ before any load-balancing steps are invoked. Note
  that, the load of all nodes except $N_j$ does not change during the
  insertion. Thus, the gain in potential, $\Delta\Phi$, is

  \begin{eqnarray*}
    \Delta\Phi
    &=&\frac{c\left(\sum^{n}_{i=1}(L_t(N_i))^2\right)-c\left(L_t(N_j)\right)^2+c\left(L_t(N_j)+1\right)^2}{\bar{L}_{t}+\frac{1}{n}}-\frac{c\left(\sum^{n}_{i=1}(L_t(N_i))^2\right)}{\bar{L}_{t}}\\
    &\leq&\frac{c\left((L_{t}(N_j)+1)^2\right)-c\left(L_t(N_j)^2\right)}{\bar{L}_{t}}\\
    &=& \frac{c\left(2L_{t}(N_j)+1\right)}{\bar{L}_t}.
  \end{eqnarray*}	

  From the invariant,  $L_{t}(N_j)\leq (\alpha+2)\cdot Min_t$. Since 
  $\bar{L}_t\geq Min_t\geq 1$, then
  $L_{t}(N_j) \leq (\alpha+2)\cdot \bar{L}_t$. Hence, 
  
  \begin{eqnarray*}
    \Delta\Phi &\leq&
    \frac{c\left(2(\alpha+2)\cdot \bar{L}_t+1\right)}{\bar{L}_t}
    \leq c\left(2\cdot \alpha+5\right).\\
  \end{eqnarray*}	

  Since an insertion moves a new key to some node, the actual cost of an
  insertion is a unit cost. Thus, the amortized cost of an insertion is
  bounded by a constant.

  \textbf{MinBalance:} There are three nodes involved, i.e.,
  $N_j$ which calls {\minbalance} steps, the minimum-loaded node $N_k$
  and the $N_k$'s lightly-loaded neighbor $N_l$. When node $N_j$ calls
  {\minbalance} steps, $N_k$ transfers its entire load to $N_l$. After
  that, $N_k$ shares a half the load of $N_j$.  The drop potential is

  \begin{eqnarray*}
    \Delta\Phi&=& \frac{c\left(L_t(N_j)^2+L_t(N_k)^2+L_t(N_l)^2
        -2(\frac{L_t(N_j)}{2})^2-(L_t(N_k)+L_t(N_l))^2\right)}{\bar{L}_t}\\
    &=&\frac{c\left(\frac{L_t(N_j)^2}{2}-2L_t(N_k)L_t(N_l)\right)}{\bar{L}_t}.
  \end{eqnarray*}	

  From the invariant, we have $L_t(N_l)\leq(\alpha+2)\cdot L_t(N_k)$.
  Since $N_j$ calls the  {\minbalance} steps, we know that
  $L_t(N_k)\leq \frac{L_t(N_j)}{\alpha}$. We have  $L_t(N_l)\leq (\alpha+2)\cdot
  \frac{L_t(N_j)}{\alpha}$.  Then, 

  \begin{eqnarray*}
    \Delta\Phi & \geq &
    \frac{c\left(\frac{L_t(N_j)^2}{2}-\frac{2(\alpha+2)\cdot
          L_t(N_j)^2}{\alpha^2}\right)}{\bar{L_t}}\\ & = &
    \frac{c\left(L_t(N_j)^2(\frac{1}{2} - \frac{2(\alpha+2)
        }{\alpha^2})\right)}{\bar{L_t}}.
  \end{eqnarray*}

  Again, from the invariant, we have $ \bar{L_t}\leq(\alpha+2)\cdot
  L_t(N_k)\leq(\alpha+2)\cdot L_t(N_j)$.  Then,

  \begin{eqnarray*}
    \Delta\Phi&\geq& c L_t(N_j)\left(\frac{1}{2(\alpha+2)} -
      \frac{2}{\alpha^2}\right).
  \end{eqnarray*}

  The data movement of {\minbalance} steps is $\left\lfloor
    \frac{L_t(N_j)}{2} \right\rfloor$+ $L_t(N_k)<L_t(N_j)$.  For any $c
  >\left(\frac{2\alpha^2(\alpha+2)}{\alpha^2-4(\alpha+2)}\right)$ and $\alpha>2(1+\sqrt{3})\approx 5.464$, we
  have $\Delta\Phi\geq L_t(N_j)$.  Thus, the data movement cost can be
  paid by this drop in potential.
\end{proof}

%
%

\section{The algorithm for the general case}
\label{sect:algorithm-delete}

In this section, we consider the general case that supports both insert and delete
operations. In order to deal with the imbalance ratio, after these operations,
load balancing steps are invoked. For insertion, we perform the
{\minbalance} operation presented in the previous section. On the other
hand, for deletion, we present the {\split} operation.

\subsection{The Split operation}

The {\split} operation is invoked after a deletion occurs on any node
$u$. Let $z$ be the lightly-loaded neighbor of $u$. 
The load-balancing steps are called when the load of $u$ is less
than $\frac{1}{\beta}$ fractions of the maximum load at that time. There are
two types of load-balancing steps depending on the load of $z$.  If
the load of $z$ is more than $\frac{2}{\beta}$ fractions of the maximum load,
$u$ averages out its load with $z$. We call these steps, the
{\splitnbr}. In other case, $u$ transfers its entire load to $z$.
After that, $u$ shares a half load of the maximum-loaded node.  These
steps are called the {\splitmax}. Note that the {\splitnbr} calls
only the {\sc NbrAdjust} operation but the {\splitmax} calls both the
{\sc NbrAdjust} and the {\sc Reorder} operations.

We again note that to be able to perform these operations, the system
must maintain non-local information, i.e., the maximum load.

\begin{algorithm} {\bf Procedure} {\split} $(u)$
  \label{alg:split}
  \begin{algorithmic}[1]
    \STATE Let $w$ be the maximum-loaded node in the system. 
    \IF {$L(u) \leq \frac{L(w)}{\beta}$} 
    \STATE Let $z$ be the lightly-loaded neighbor of $u$. 
    \IF {$L(z)\leq 2\cdot \frac{L(w)}{\beta}$} 
    \STATE //{\splitmax} steps 
    \STATE Transfer all keys of $u$ to $z$.
    \STATE Transfer a half-load of $w$ to $u$, s.t., $L(w) =
    \left\lceil \frac{L(w)}{2}\right\rceil$ and $L(u) = \left\lfloor
      \frac{L(w)}{2}\right\rfloor$. 
    \ELSE 
    \STATE //{\splitnbr} steps 
    \STATE
    Move keys from $z$ to $u$ to equalize load, s.t., $L(u) = \left
      \lceil \frac{(L(u)+L(z))}{2}\right\rceil$ and $L(z) = \left\lfloor
      \frac{(L(u)+L(z))}{2} \right\rfloor$.
    \ENDIF
    \STATE \{Rename nodes appropriately.\}
    \ENDIF
  \end{algorithmic}
\end{algorithm}

\subsection{Analysis of imbalance ratio for the general case}

We assume the notion of time and the invariant in the same way as the
previous section. We analyze the imbalance ratio after any event.
In previous section, we analyze the ratio after insertion and
min-transfer event. In this section, we have to deal with deletion
and two more events:
\begin{itemize}
\item The \textit{nbr-transfer} event on node $z$ is the event that
  occurs when $z$ receives load from its neighbor, which invokes the
  {\splitmax} steps, and
\item The \textit{nbr-share} event on node $z$ is the event that occurs
  when $z$ shares load with its neighbor, which invokes the
  {\splitnbr} steps.
\end{itemize}

Let $Max_t$ and $Max'_t$ denote the maximum load in the system after
time $t$ and right before time $t$ respectively, i.e., $Max_t
=\max_{u\in V} L_t(u)$ and $Max'_t =\max_{u\in V} L'_t(u)$.

\subsubsection{Overview of the analysis}
The major problem for applying the proofs in the insert-only case to
the general case is the assumption that the minimum load cannot
decrease over time.  To handle this, we shall analyze the system in
phases.  Each phase spans the period where the minimum load is non-decreasing;
this allows us to apply mostly the same techniques to the analyze the situation 
when the update does not change the analysis phase.

The only way an update could cause phase change is when there is a deletion
in the minimum loaded node.  In that case, we show that if there is a deletion 
on the minimum loaded node that starts the phase change, the ratio between 
the minimum load and the maximum load is bounded by a constant.  This provides
the sufficient initial condition for the analysis of the next phase.

\subsubsection{Transition between phases}

We shall analyze the transition between two consecutive phases
occurring when the minimum load decreases. 

Consider a deletion occurring on
any nodes. Next lemma shows the load property of any node after
deletion occurring on it.

\begin{lemma}\label{lem:delete-property}
  Suppose $\beta>3$. Consider the case that a deletion occurs on
  some node $u$ at some time $t$.  After deletion and its corresponding
  load-balancing steps, $L_t(u)>\frac{Max_t}{\beta}$. Moreover, the minimum
  load can decrease in the case that  load-balancing steps are not
  invoked.
\end{lemma}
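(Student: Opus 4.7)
The plan is to carry out a case analysis on which branch of \split\ is taken at time $t$ and verify both claims directly from the algorithmic rules. Write $L'$ for loads right before $t$ and $L$ for loads after the deletion and its balancing step; there are three subcases: no balancing, \splitmax, or \splitnbr.

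First I would bound $L_t(u)$. If no balancing is triggered, the failed guard gives $L_t(u)=L'_t(u)-1>L'_t(w)/\beta$ and $Max_t=L'_t(w)$, so the claim is immediate. For \splitmax, $L_t(u)=\lfloor L'_t(w)/2\rfloor$; I would first check that no node ends with load above $L'_t(w)$ by using $L'_t(u)-1\le L'_t(w)/\beta$ and $L'_t(z)\le 2L'_t(w)/\beta$ to conclude $L_t(z)\le 3L'_t(w)/\beta<L'_t(w)$ for $\beta>3$, so $Max_t\le L'_t(w)$, and then observe $\lfloor L'_t(w)/2\rfloor>L'_t(w)/\beta$ for any $\beta>2$. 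For \splitnbr, the guard $L'_t(z)>2L'_t(w)/\beta$ gives $L_t(u)=\lceil (L'_t(u)-1+L'_t(z))/2\rceil\ge L'_t(z)/2>L'_t(w)/\beta$, and averaging keeps $L_t(u)$ and $L_t(z)$ below $L'_t(w)$, so again $Max_t\le L'_t(w)$.

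The second claim I would read as: the minimum load decreases only when no balancing step is invoked. The ``only when'' direction is witnessed by the no-balancing case when $u$ happens to be the minimum-loaded node, since then $L_t(u)=L'_t(u)-1<Min'_t$. For the converse, suppose \split\ fires; the triggering condition gives $L'_t(w)\ge\beta(L'_t(u)-1)\ge\beta(Min'_t-1)$. In \splitmax, this forces $\lfloor L'_t(w)/2\rfloor\ge Min'_t$ for $\beta>3$, pinning both new loads of $u$ and $w$ above $Min'_t$, while $L_t(z)\ge L'_t(z)\ge Min'_t$ is free. In \splitnbr, combining the guard $L'_t(z)>2L'_t(w)/\beta$ with $L'_t(u)\ge Min'_t$ gives $L'_t(u)-1+L'_t(z)\ge 3Min'_t-2$, so both $L_t(u)$ and $L_t(z)$ are at least $Min'_t$. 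The main obstacle is the floor/ceiling bookkeeping in \splitmax, which I would discharge by substituting $L'_t(w)\ge\beta(Min'_t-1)$ and checking the resulting linear inequalities; trivially small load corner cases are absorbed into the additive constant $c_0$ from the problem setup.
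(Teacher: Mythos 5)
Your case analysis for the first claim is exactly the paper's: split on no-balancing versus {\splitmax} versus {\splitnbr}, observe in each branch that no node's load rises above $Max'_t$ (so $Max_t\leq Max'_t$), and then check $L_t(u)>Max'_t/\beta$ directly from the guards. The inequalities you use ($L_t(z)\leq 3Max'_t/\beta<Max'_t$ for $\beta>3$ in {\splitmax}; $L_t(u)\geq L'_t(z)/2>Max'_t/\beta$ in {\splitnbr}) are literally the ones in the paper's proof, so this part is a match.

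For the second claim you actually do more than the paper. The paper only considers a deletion occurring on the minimum-loaded node itself and asserts, from $L_t(u)>Max_t/\beta$, that the minimum does not decrease when balancing fires; it does not explicitly rule out that some \emph{other} affected node drops below $Min'_t$ --- in particular the maximum-loaded node $w$, whose load is halved by {\splitmax}, or the neighbor $z$, whose load is reduced by the averaging in {\splitnbr}. You check all of $u$, $z$, and $w$ in both branches, deriving $L'_t(w)\geq\beta(Min'_t-1)$ from the trigger condition and pushing it through the floors and ceilings. This is the right thing to do and closes a step the paper leaves implicit; the price is the small-load corner cases, which you correctly park under the additive constant $c_0$ already built into the imbalance-ratio definition. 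No gap.
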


\begin{proof}
  After a deletion occurs on $u$, there are two cases.  The first case is when 
  load-balancing steps are not invoked. In this case, the load of
  $u$ after deletion is more than $\frac{Max'_t}{\beta}$.  Note that at time
  $t$, the only event that occurs in the system is a deletion on $u$.
  Then, the maximum load does not increase, i.e., $Max'_t\geq
  Max_t$. We have that $L_t(u)>\frac{Max'_t}{\beta}\geq \frac{Max_t}{\beta}$.
  
  We are left to consider the second case when  load-balancing
  steps are invoked.  In this case, the load of $u$ is not over
  $\frac{Max'_t}{\beta}$. Let $z$ be the lightly-loaded neighbor of
  $u$ at time $t$. There are two sub cases.

  \textbf{Sub case 1:} Node $u$ calls {\splitmax} steps.  First, 
  $u$ transfers its entire load to $z$.  After that, $u$ shares a half 
  load of the maximum-loaded node, i.e., 
  \[ L_t(u)= \left\lfloor \frac{Max'_t}{2}\right\rfloor > \frac{Max'_t}{\beta}.\]
  In this sub case, the node except $u$
     that its load increases at time $t$ is $z$. Note
  that, $z$ receives load from $u$, i.e., 
  \[ L_t(z)= L'_t(z)+L'_t(u)\leq \frac{2Max'_t}{\beta}+\frac{Max'_t}{\beta}\leq \frac{3Max'_t}{\beta}.\]
    From the assumption that $\beta>3$,  $L_t(z)< Max'_t$. Since the maximum load does not increase
  at time $t$.  Then, we have $L_t(u)> \frac{Max_t}{\beta}$.

  \textbf{Sub case 2:} Node $u$ calls {\splitnbr} steps. This sub case
  occurs when $L'_t(z)>\frac{2Max'_t}{\beta}$. The load of $z$ is shared to
  $u$ to balance their loads, i.e., 
  \[ L_t(u)\geq \frac{\frac{2Max'_t}{\beta}+L'_t(u)}{2}> \frac{Max'_t}{\beta}.\]
  In this sub case, there are two nodes involved, $u$ and $z$.  Note that the only node in
  the system that its load increases at time $t$ is $u$.  We know that
  $L'_t(u)\leq\frac{Max'_t}{\beta}<\frac{Max'_t}{2}$. Then, $u$'s load
  after  load-balancing steps cannot over $Max'_t$ because its load
  less than $\frac{Max'_t}{2}$ and its received load cannot over
  $\frac{Max'_t}{2}$.  Since the maximum load does not increase
  at time $t$.  Therefore, we have $L_t(u)> \frac{Max_t}{\beta}$.
  
  Consider the load of the minimum-loaded node after deletion occurring on it at time
  $t$. When it is more than $\frac{Max'_t}{\beta}$,  load-balancing steps
  are not invoked. From the first case, the minimum load can decrease.
  When it is not over $\frac{Max'_t}{\beta}$,  load-balancing steps
  are invoked. From the second case, the load after load-balancing
  steps is more than $\frac{Max_t}{\beta}$. Thus, the minimum load does not
  decrease after deletion in this case.
\end{proof}

From Lemma~\ref{lem:delete-property}, the minimum load at time $t$ can
decrease in the case that  load-balancing steps are not invoked.
After the minimum load decreases at time $t$, we have
$Min_t>\frac{Max_t}{\beta}$ and the phase changes.  Moreover, the following
condition holds:

At the beginning of each phase at time $t$, the imbalance ratio
guarantee is $\beta$, i.e.,
\[
\frac{Max_t}{Min_t} < \beta.
\]

\subsubsection{Imbalance ratio inside each phase}
\label{sect:inside-phase}

We prove the same invariant in the previous section, i.e.,
\begin{quote} {\sc For any time $t$, the load of any node is not over
    $(\alpha+2)$ times  the minimum load,}
\end{quote}
holds after each operation.

At the beginning of each phase, the imbalance load
guarantee is $\beta$.  At the end, we choose $\beta$ 
such that $\beta\leq\alpha$; this implies that at the beginning of 
each phase, $Max_t\leq\beta Min_t\leq (\alpha+2) Min_t$, as required.

In our analysis later on, we ignore the case of the deletion 
which is not followed with load-balancing steps, because 
the load of the affected can never violate the ratio.  Thus,
the events of this type shall not be considered in our analysis.

To analyze the imbalance ratio, we consider how the load of
each affected node changes. We deal with two easy events first.  
For insertion, we use Lemma~\ref{lem:insert-property} to guarantee 
the load of inserted node and  for deletion, we use 
Lemma~\ref{lem:delete-property} to guarantee the load of deleted node.
For the rest of this section, we are left with the node which effected
from  load-balancing steps.  There are three types of events, i.e.,
 nbr-transfer, nbr-share and min-transfer events.

We summarize the events and how to deal with them as follows.

\begin{itemize}
\item For the {\bf nbr-transfer events}, we focus on a
  lightly-loaded neighbor of the deleted node. It receives an entire load
  of the deleted node.  Lemma~\ref{lem:load-after-splitmax-splitnbr} deals
  with this type of events.
\item For the {\bf nbr-share events}, a lightly-loaded neighbor of the
  deleted node averages out its load with the deleted node. The imbalance
  ratio is also proved in
  Lemma~\ref{lem:load-after-splitmax-splitnbr}.
\item For the {\bf min-transfer events}, we focus on the load of a
  lightly-loaded neighbor of the minimum-loaded node. It receives load
  from the minimum-loaded node.  Lemma~\ref{lem:min-transfer-general}
  deals with this type of events.
\end{itemize}

The next lemma handles the case of  nbr-transfer and nbr-share
events.

\begin{lemma}
  \label{lem:load-after-splitmax-splitnbr}
  Suppose $\alpha\geq 3$ and $\beta\geq
  \frac{3(\alpha+2)}{\alpha}$. Consider the case that a deletion occurs on node
  $u$ at some time $t$ and $u$ calls  load balancing steps. Let $z$
  be the lightly-loaded neighbor of $u$ at time $t$. Then
  $L_t(u)\leq \alpha\cdot Min_t$ and $L_t(z)\leq \alpha\cdot Min_t$.
\end{lemma}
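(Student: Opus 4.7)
The plan is to split along the two sub-cases inside the {\split} procedure and, in each, derive explicit bounds on $L_t(u)$ and $L_t(z)$ from the triggering thresholds combined with the invariant. The key enabling observation is that because $u$ actually invokes load-balancing steps, Lemma~\ref{lem:delete-property} prevents the minimum from decreasing at this step, i.e., $Min'_t \leq Min_t$; combining this with the invariant $Max'_t \leq (\alpha+2) Min'_t$ valid right before $t$ yields $Max'_t \leq (\alpha+2) Min_t$, which is the clean form I will feed into the arithmetic.

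For the {\splitmax} sub-case, the triggering conditions read $L'_t(u) \leq Max'_t/\beta$ and $L'_t(z) \leq 2 Max'_t/\beta$. After the step, $L_t(u) = \lfloor Max'_t/2 \rfloor$ and $L_t(z) = L'_t(z)+L'_t(u) \leq 3 Max'_t/\beta$. Substituting $Max'_t \leq (\alpha+2) Min_t$ gives $L_t(u) \leq (\alpha+2) Min_t/2 \leq \alpha Min_t$ (since $\alpha \geq 2$), and $L_t(z) \leq 3(\alpha+2) Min_t/\beta \leq \alpha Min_t$, where the last step uses exactly the hypothesis $\beta \geq 3(\alpha+2)/\alpha$. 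For the {\splitnbr} sub-case, I still have $L'_t(u) \leq Max'_t/\beta$ and trivially $L'_t(z) \leq Max'_t$. The averaging step produces $L_t(u), L_t(z) \leq (L'_t(u)+L'_t(z))/2 \leq Max'_t (\beta+1)/(2\beta)$, so using the invariant the required inequality reduces to $(\alpha+2)(\beta+1)/(2\beta) \leq \alpha$, equivalently $\beta \geq (\alpha+2)/(\alpha-2)$. Using $\alpha \geq 3$ it is easy to verify $3(\alpha+2)/\alpha \geq (\alpha+2)/(\alpha-2)$ (it reduces to $2\alpha \geq 6$), so the hypothesis covers this.

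I do not expect a real obstacle here; the argument is essentially algebraic verification that the two thresholds chosen inside {\split} yield the bounds the lemma claims. The main care points are, first, that in the {\splitnbr} sub-case both of $u$ and $z$ end with essentially the same load so both bounds must be checked, and second, that the invariant is available before $t$ only in the form $Max'_t \leq (\alpha+2) Min'_t$, which is why the initial reduction $Min'_t \leq Min_t$ (justified by invoking Lemma~\ref{lem:delete-property} in the load-balancing branch) is the pivot that makes the rest of the calculation go through.
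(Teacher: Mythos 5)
Your proof is correct and follows essentially the same route as the paper's: the same two-case split on {\splitmax} versus {\splitnbr}, the same use of the triggering thresholds $L'_t(u)\leq Max'_t/\beta$ and $L'_t(z)\leq 2Max'_t/\beta$ (resp.\ $L'_t(z)\leq Max'_t$), the invariant to replace $Max'_t$ by $(\alpha+2)$ times the minimum, and the same algebraic reductions $3(\alpha+2)/\beta\leq\alpha$ and $(\beta+1)(\alpha+2)/(2\beta)\leq\alpha$. The only cosmetic difference is that you convert to $Min_t$ up front via the non-decreasing property while the paper does so at the end of each case.
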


\begin{proof}  
  There are two types of load balancing steps after deletion.

  \textbf{Case 1:} Node $u$ invokes {\splitmax} steps and a
  nbr-transfer event occurs on $z$.  This case occurs when
  $L'_t(z)\leq \frac{2Max'_t}{\beta}$ and $L'_t(u)\leq \frac{Max'_t}{\beta}$. Node $u$
  proceeds by transferring its entire load to $z$, i.e., 
  \[ L_t(z)\leq \frac{2Max'_t}{\beta}+\frac{Max'_t}{\beta}.\] 
  From the invariant, we have $L_t(z)\leq
  \frac{3((\alpha+2)\cdot Min'_t)}{\beta}$.  Since $\beta\geq
  \frac{3(\alpha+2)}{\alpha}$, it follows that $\frac{3(\alpha+2)}{\beta}\leq\alpha$.  Then,
  the load of $z$ at time $t$ is not over $\alpha\cdot Min'_t$. From
  the non-decreasing property of the minimum load, we have
  $L_t(z)\leq\alpha\cdot Min_t$.  
  
  After that, $u$ shares half a load
  of the maximum-loaded node. From the invariant, we have that
  $L_t(u)\leq \frac{(\alpha+2)Min'_t}{2}$. When $\alpha\geq 2$, it follows that
  $\frac{(\alpha+2)}{2}\leq\alpha$.  Then, the load of $u$ after deletion is
  not over $\alpha\cdot Min_t'$ and  we have $L_t(u)\leq
  \alpha\cdot Min_t$ from the non-decreasing property of the minimum load.

  \textbf{Case 2:} Node $u$ invokes {\splitnbr} steps and a
  nbr-share event occurs on $z$. This case occurs when $L'_t(z)>
  \frac{2Max'_t}{\beta}$ and $L'_t(u)\leq \frac{Max'_t}{\beta}$.  From the invariant,
  we have $L'_t(z)\leq Max'_t\leq (\alpha+2)\cdot Min'_t$. Node $u$
  and $z$ share their loads equally, i.e., 
  \[L_t(z)\leq
  \frac{(\alpha+2)\cdot Min'_t+(\alpha+2)\cdot \frac{Min'_t}{\beta}}{2} \leq
  \frac{(\beta+1)(\alpha+2)}{2\beta}\cdot Min'_t.\]
  From the assumption that $\alpha\geq 3$ and $\beta\geq
  \frac{3(\alpha+2)}{\alpha}$, we have that $\beta>\frac{(\alpha+2)}{(\alpha-2)}$.  
  Moreover, when $\beta>\frac{(\alpha+2)}{(\alpha-2)}$ and $\alpha>2$, it follows
  that $\frac{(\beta+1)(\alpha+2)}{2\beta}\leq\alpha$.  Then, the load
  of $z$ and $u$ after load-balancing steps are not over $\alpha\cdot
  Min'_t$. From the non-decreasing property of the minimum load, we have
  $L_t(u)\leq \alpha\cdot Min_t$ and $L_t(z)\leq \alpha\cdot Min_t$.  
\end{proof}

We are left with the min-transfer case. Let $e$ be the min-transfer
event occurring on $z$. Next lemma deals with the case that $e$ is the
first event occurring on $z$ in phase $d$.

\begin{lemma}
  \label{lem:minimum-decrease}
  Consider any phase $d$. Suppose $\beta>2$. If the first event
  occurring on $z$ in that phase is the $i$-th min-transfer event, 
  $L_{t_{i}}(z)\leq (\beta+1)\cdot Min_{t_{i}}.$
\end{lemma}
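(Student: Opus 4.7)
The plan is to exploit the two structural facts that together define a phase: (i) at the start of any phase $d$, the imbalance ratio is at most $\beta$ (as observed at the end of the previous subsection, this is the ``good starting condition'' guaranteed by Lemma~\ref{lem:delete-property}); and (ii) within a phase, the minimum load is non-decreasing (this is the defining property of a phase). Since the hypothesis says that no event touches $z$ from the beginning of the phase until the $i$-th min-transfer event, $z$'s load is essentially frozen over that interval, and a single min-transfer adds only a minimum load worth of keys. I expect this to give $(\beta+1)\cdot Min_{t_i}$ almost immediately.

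Concretely, I would proceed as follows. First, let $t^*$ denote the start of phase $d$. From the phase-change condition we have $Max_{t^*}\leq \beta\cdot Min_{t^*}$, and since $z$ is some node in the system, $L_{t^*}(z)\leq Max_{t^*}\leq \beta\cdot Min_{t^*}$. Second, because the first event in phase $d$ that affects $z$ is the $i$-th min-transfer event, $z$'s load cannot change between $t^*$ and $t_i-\epsilon$, so $L'_{t_i}(z)=L_{t^*}(z)\leq \beta\cdot Min_{t^*}$. Applying the non-decreasing minimum property inside the phase between $t^*$ and $t_i-\epsilon$ gives $Min_{t^*}\leq Min'_{t_i}$, so $L'_{t_i}(z)\leq \beta\cdot Min'_{t_i}$.

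Third, a min-transfer event contributes exactly the transferring node's load, namely $Min'_{t_i}$, to $z$, so
\[
L_{t_i}(z)=L'_{t_i}(z)+Min'_{t_i}\leq \beta\cdot Min'_{t_i}+Min'_{t_i}=(\beta+1)\cdot Min'_{t_i}.
\]
Finally, by Lemma~\ref{lem:minload_not_decrease} (whose insert-only proof is mirrored by the phase-internal non-decreasing property established via Lemma~\ref{lem:delete-property}), $Min'_{t_i}\leq Min_{t_i}$, yielding $L_{t_i}(z)\leq (\beta+1)\cdot Min_{t_i}$.

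The only genuinely delicate point is justifying that $Min$ does not decrease from $t^*$ up to $t_i$ within the phase. This is immediate by the very definition of a phase, but one must also make sure that no intermediate min-transfer could have lowered the minimum; this is handled by the same argument as in Lemma~\ref{lem:minload_not_decrease}, namely that the ``donor'' and recipient of any min-transfer both end up with load at least the previous minimum. The hypothesis $\beta>2$ plays no essential role beyond keeping the phase-boundary constants consistent. So I expect the proof to be short and direct, with the main subtlety being simply verifying the phase-monotonicity invariant carries through the general (insert/delete) setting.
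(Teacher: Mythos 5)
Your proof is correct and follows essentially the same route as the paper's: bound $L(z)$ at the start of the phase by $\beta\cdot Min$ via the phase-entry condition, observe that $z$ is untouched until the $i$-th min-transfer event, add one $Min'_{t_i}$ for the transferred load, and finish with the phase-internal monotonicity of the minimum. The only cosmetic difference is that the paper treats the first phase ($d=1$) as a separate base case using the uniform initial load $c_0$ (where the starting ratio is $1<\beta$, not a consequence of Lemma~\ref{lem:delete-property}), whereas you fold it into the same generic bound; this is harmless.
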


\begin{proof}
  There are two cases to consider.
  First, we consider the case that $d=1$, i.e., the first phase.
  At the beginning of this phase, the load of any node is equal to $c_0$. 
  Note that the load of $z$ does not change until the first min-transfer event occurs on it.
  Moreover, at time $t_1$, the minimum load is
  also $c_0$ because it cannot decrease and it cannot increase over $c_0$. 
  When the min-transfer event occurs on $z$, it's load increases, i.e.,
  $L_{t_{1}}(z)=c_0+ Min'_{t_1}= 2Min'_{t_{1}}$. From the
  non-decreasing property of the minimum load, we have 
  $L_{t_{1}}(z)\leq 2Min_{t_{1}}$. From the assumption that $\beta>2$, it follows that
  $L_{t_{1}}(z)\leq (\beta+1)\cdot Min_{t_{1}}$.
	  
  For the case that $d > 1$, let $t'$ be the beginning time of phase
  $d$. From the load condition at the beginning of each phase,
   $Max_{t'}<\beta Min_{t'}$. Therefore, $L_{t'}(z)<\beta Min_{t'}$. 
   After the $i$-th min-transfer event occurs on $z$, its load increases,
  i.e., $L_{t_{i}}(z)=L_{t'}(z)+Min'_{t_{i}} \leq \beta\cdot Min_{t'}
  +Min'_{t_{i}}$.  Because the minimum load never decreases in each phase,
  we have  $L_{t_{i}}(z)\leq (\beta+1)\cdot Min_{t_{i}}$ as
  required.
\end{proof}

Now, assume that $e$ is not the first event occurring on
$z$ in phase $d$.

Let $e'$ be the latest event that occurs on $z$ before an event $e$. The
next lemma considers the case where $e'$ is not a min-transfer event,
while Lemma~\ref{lem:genneralizemintransfer2}, which is more involved, considers the case when
$e'$ is a min-transfer event.

\begin{lemma}
  \label{lem:event-before-min-transfer}
  Consider any phase. Suppose $\alpha\geq 3$ and $\beta\geq
  \frac{3(\alpha+2)}{\alpha}$. If any event $e'$ except a min-transfer event occurs on
  node $z$ right before the $i$-th min-transfer event, then after the
  $i$-th min-transfer event occurs on $z$, $ L_{t_{i}}(z)\leq(\alpha +
  1)\cdot Min_{t_{i}}.  $
\end{lemma}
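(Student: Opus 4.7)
The plan is to case-split on the type of the event $e'$ (the latest event on $z$ strictly before the $i$-th min-transfer event), show that each case gives the uniform bound $L_{t'}(z)\le\alpha\cdot Min_{t'}$ at the time $t'$ of $e'$, and then push this bound forward to $t_i$ using the absence of intermediate events on $z$ together with the in-phase monotonicity of the minimum load.

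First I would enumerate the possible types of $e'$. Since we are ignoring deletions that do not trigger load-balancing (by the convention stated just before the lemma), and since $e'$ is not a min-transfer, the options are: (i) an insertion at $z$ (possibly followed by {\minbalance} steps initiated by $z$, in which case the subsequent transfers affect other nodes, not $z$ itself); (ii) a deletion at $z$ that invokes either {\splitmax} or {\splitnbr} on $z$; and (iii) a load-balancing step triggered by a deletion at a neighbor of $z$ in which $z$ receives load, i.e.\ an nbr-transfer or nbr-share event on $z$. Case (i) is handled by Lemma~\ref{lem:insert-property}, giving $L_{t'}(z)\le\alpha\cdot Min_{t'}$. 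Cases (ii) and (iii) are exactly what Lemma~\ref{lem:load-after-splitmax-splitnbr} was designed for: in each of them we get $L_{t'}(z)\le\alpha\cdot Min_{t'}$ under the standing hypotheses $\alpha\ge 3$ and $\beta\ge 3(\alpha+2)/\alpha$.

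Next I would propagate this bound to the instant just before $t_i$. Since $e'$ is the \emph{latest} event on $z$ before the $i$-th min-transfer event, no event modifies $L(z)$ in the open interval $(t', t_i)$; hence $L'_{t_i}(z)=L_{t'}(z)\le\alpha\cdot Min_{t'}$. At $t_i$, the $i$-th min-transfer event adds exactly $Min'_{t_i}$ keys to $z$, so
\[
L_{t_i}(z) \;=\; L'_{t_i}(z)+Min'_{t_i}\;\le\;\alpha\cdot Min_{t'}+Min'_{t_i}.
\]
Finally, because the entire argument lives inside a single phase, the minimum load is non-decreasing on $[t',t_i]$, so $Min_{t'}\le Min_{t_i}$ and $Min'_{t_i}\le Min_{t_i}$. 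Substituting yields $L_{t_i}(z)\le(\alpha+1)\cdot Min_{t_i}$, as required.

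The only real obstacle is bookkeeping: making sure the enumeration of possible $e'$ is exhaustive (in particular verifying that an insertion at $z$ that triggers {\minbalance} steps does not leave $z$ with a larger load than the Lemma~\ref{lem:insert-property} bound, since in that case $z$ is the \emph{initiator}, not the recipient of the subsequent transfers) and confirming that the hypotheses $\alpha\ge3$ and $\beta\ge 3(\alpha+2)/\alpha$ are exactly what is needed to invoke Lemmas~\ref{lem:insert-property} and~\ref{lem:load-after-splitmax-splitnbr}. Once this is checked, the rest is a one-line telescoping plus in-phase monotonicity.
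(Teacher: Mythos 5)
Your proposal matches the paper's proof essentially step for step: the same case enumeration for $e'$, the same appeals to Lemma~\ref{lem:insert-property} (insertions) and Lemma~\ref{lem:load-after-splitmax-splitnbr} (deletion-triggered events on $z$), and the same final step adding $Min'_{t_i}$ and invoking the in-phase non-decrease of the minimum load. No gaps; the argument is correct as written.
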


\begin{proof}
  An event $e'$ can be an insertion, or a deletion which is followed with
  load-balancing steps, or  an nbr-transfer event, or an nbr-share event.  
  After event $e'$ occurs on $z$ at time $t'$,
  from Lemma~\ref{lem:insert-property} (for insertions) and
  Lemma~\ref{lem:load-after-splitmax-splitnbr} (for other events), we have
  $L_{t'}(z)\leq\alpha\cdot Min_{t'}$. The load of $z$ increases again
  from  the $i$-th min-transfer event, i.e.,
  $L_{t_{i}}(z)=L_{t'}(z)+Min'_{t_{i}}\leq \alpha\cdot
  Min_{t'}+Min'_{t_{i}}$. Because the minimum load does not
  decrease in each phase, it follows that $L_{t_{i}}(z)\leq (\alpha+1)\cdot
  Min_{t_{i}}$.
\end{proof}

Finally, we consider the case when the latest event $e'$ before event
$e$ is a min-transfer event.  Recall that we categorize the
min-transfer event into two types: the left-transfer event and the
right-transfer event.  The next lemma is a generalization of
Lemma~\ref{lem:genneralizemintransfer} but it deals more with
deletion, {\splitmax} and {\splitnbr}.

\begin{lemma}
  \label{lem:genneralizemintransfer2}
  Suppose $\alpha \geq \frac{3+\sqrt{33}}{2}\approx 4.373$,
  $\beta\geq \frac{3(\alpha+2)}{\alpha}$ and $l\geq 0$. Consider the $i$-th,
  $(i+1)$-th, ..., $(i+l+1)$-th min-transfer events on $z$ in
  any phase. If the $i$-th and $(i+l+1)$-th min-transfer
  events are of the same type, while the $(i+1)$-th, $(i+2)$-th, ...,
  $(i+l)$-th min-transfer events are of type different from that of
  the $i$-th and $(i+l+1)$-th events, then after the $(i+l+1)$-th
  min-transfer event, $L_{t_{i+l+1}}(z)\leq (\alpha+l+1)\cdot
  Min_{t_{i+l+1}}$.
\end{lemma}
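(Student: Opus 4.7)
The plan is to mimic the structure of Lemma~\ref{lem:genneralizemintransfer}, but now account for the additional events that can occur between min-transfer events on $z$ (deletions accompanied by {\splitmax} or {\splitnbr}, on top of insertions) and the additional load-balancing actions that the tracked right-neighbor $x$ of $v_i$ may invoke. Throughout the argument I stay inside a single phase so that the non-decreasing property of the minimum load is still in force; this is essential for every telescoping step.

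First I would split on whether any non-min-transfer event occurs on $z$ strictly between the $i$-th and $(i+l+1)$-th min-transfer events. If such an event exists, let $k$ be the index of the first min-transfer event on $z$ after the latest such event. By Lemma~\ref{lem:event-before-min-transfer}, $L_{t_k}(z)\le (\alpha+1)\cdot Min_{t_k}$. Since no non-min-transfer event occurs on $z$ after $t_k$, I would telescope via Eq.~(\ref{eqn:tele}) from index $k$ up through $i+l+1$, bounding each intermediate $Min'_{t_j}$ by $Min_{t_{i+l+1}}$ using the phase's non-decreasing minimum. At most $l$ min-transfer increments survive (since $k>i$), yielding the required $(\alpha+l+1)\cdot Min_{t_{i+l+1}}$ bound, mirroring the first half of the original proof.

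If no non-min-transfer event occurs on $z$ between $t_i$ and $t_{i+l+1}$, the telescoping identity of Claim~\ref{clamin:1} is re-derived verbatim because its derivation only uses that every event on $z$ inside the window is a min-transfer. Assuming WLOG that the $i$-th and $(i+l+1)$-th events are right-transfers, I would introduce $x$, the node immediately to the right of $v_i$ at time $t_i$. Because the $(i+l+1)$-th event is also a right-transfer, $x$ (or its renamed descendant) must become the minimum-loaded node at some $t^*$ with $t_i<t^*<t_{i+l+1}$. The sub-case where $x$ never invokes a load-balancing step in the window carries over unchanged: $x$'s load can only increase from insertions or min-transfers received on it, so $L'_{t_i}(x)\le L_{t^*}(x)=Min_{t^*}\le Min'_{t_{i+l+1}}$, yielding $L_{t_{i+l+1}}(z)\le (l+3)\cdot Min_{t_{i+l+1}}\le (\alpha+l+1)\cdot Min_{t_{i+l+1}}$ for $\alpha\ge 2$.

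The main obstacle will be the sub-case where $x$ does invoke a load-balancing step at some latest time $t'$, because now three possibilities open up: {\minbalance} (from an insertion on $x$), {\splitmax}, or {\splitnbr} (from a deletion on $x$). For each I need a lower bound on $L_{t'}(x)$ relative to $Min_{t'}$ strong enough to absorb $L_{t_i}(z)\le(\alpha+2)Min_{t'}$ in the ratio $L_{t_i}(z)/Min'_{t_{i+l+1}}$. For {\minbalance} the original bound $L_{t'}(x)\ge \lceil\alpha\cdot Min_{t'}/2\rceil$ still applies. For {\splitmax} I would combine $L_{t'}(x)=\lfloor Max'_{t'}/2\rfloor$ with the triggering condition $L'_{t'}(x)\le Max'_{t'}/\beta$ and the invariant to control $Max'_{t'}$ relative to $Min_{t'}$. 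For {\splitnbr} I would use the triggering condition $L'_{t'}(\text{sibling})>2Max'_{t'}/\beta$ together with the averaging rule to lower bound $L_{t'}(x)$. In each sub-sub-case $x$'s load is non-decreasing from $t'$ up to $t^*$, so $Min'_{t_{i+l+1}}\ge L_{t'}(x)$. Plugging into Claim~\ref{clamin:1} and dividing by $Min_{t_{i+l+1}}$ as in Eq.~(\ref{eqn:L/M}), the worst of the three sub-sub-cases reduces the constraint on $\alpha$ to a quadratic; the threshold $\alpha\ge\frac{3+\sqrt{33}}{2}$ is exactly the condition $\alpha^2-3\alpha-6\ge 0$, equivalently $\frac{3(\alpha+2)}{\alpha}\le\alpha$, so that $\beta\le\alpha$ holds and the $\frac{2(\alpha+2)}{\alpha}\le\alpha$-style step from Lemma~\ref{lem:genneralizemintransfer} gets replaced by its ``$3$'' counterpart induced by {\splitmax}/{\splitnbr}.
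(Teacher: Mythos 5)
Your high-level plan follows the paper's proof for the first branch (a non-min-transfer event on $z$ inside the window, handled via Lemma~\ref{lem:event-before-min-transfer} and telescoping) and for re-deriving Claim~\ref{clamin:1}, but the core of the remaining argument has two genuine gaps. First, your ``carries over unchanged'' sub-case is false in the general setting: you assert that if $x$ never invokes a load-balancing step, its load can only increase, but a deletion on $x$ that does \emph{not} trigger {\splitmax} or {\splitnbr} still decreases $L(x)$, so the inequality $L'_{t_i}(x)\le L_{t^*}(x)$ fails. The paper devotes a separate sub-case (Sub case 1.2) to exactly this situation, rescuing the bound via Lemma~\ref{lem:delete-property}: after such a deletion $Max_{t'}\le\beta L_{t'}(x)$, and since $L_{t_i}(z)\le L_{t'}(z)\le Max_{t'}$ one gets $L_{t_{i+l+1}}(z)\le(\beta+l+1)Min_{t_{i+l+1}}$. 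This (together with the analogous {\splitnbr} sub-case) is where the requirement $\beta\le\alpha$, hence $\alpha\ge\frac{3+\sqrt{33}}{2}$, actually bites --- not, as you suggest, from a lower bound on $L_{t'}(x)$ induced by {\splitmax}/{\splitnbr}.

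Second, tracking only $x$ ``or its renamed descendant'' does not work. The paper instead tracks the whole consecutive set $P$ of nodes from $x$ to $v_{i+l+1}$ and argues about the \emph{latest} node $y\in P$ to invoke a balancing step; the node that eventually becomes the minimum-loaded right neighbor of $z$ need not be $x$ at all. In particular, when a node calls {\splitmax} it is relocated by {\sc Reorder} next to the maximum-loaded node and leaves the region entirely, so your plan to use $L_{t'}(x)=\lfloor Max'_{t'}/2\rfloor$ bounds the load of a node that is no longer relevant. The paper's Sub case 2.3 instead switches attention to the neighbor $w$ that absorbed the transferred load, proves $w\in P$, and uses $L_{\hat t}(w)\ge 2Min_{\hat t}$ to get $2Min_{\hat t}\le Min'_{t_{i+l+1}}$, which combined with $L_{t_i}(z)\le(\alpha+2)Min_{\hat t}$ closes that case with only $\alpha\ge 2$. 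Without the set $P$ and the switch from $y$ to $w$, your case analysis is not exhaustive and the {\splitmax} branch does not go through.
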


\begin{proof}
  Without loss of generality, we assume that the $i$-th and 
  $(i+l+1)$-th min-transfer events occurring on $z$ are 
  right-transfer events; and the $(i+1)$-th, $(i+2)$-th,
  ..., $(i+l)$-th are  left-transfer events. Let $v_{i}$ be the
  minimum-loaded node at time $t_i$.
 
  We first deal with the case that there exists an event except the
  min-transfer event occurring on $z$ between the $i$-th and
   $(i+l+1)$-th min-transfer events. Assume that the latest
  event except the min-transfer event occurs on $z$ right before the
  $k$-th min-transfer event where $i<k\leq i+l+1$. From
  Lemma~\ref{lem:event-before-min-transfer}, after the $k$-th
  min-transfer event, we have
  \[L_{t_k}(z)\leq(\alpha + 1)\cdot Min_{t_k}.\]
  
  After the $k$-th min-transfer event, only the
  min-transfer event can occur on $z$.  Note that after time $t_k$,
  there are at most $i+l-k$ min-transfer events occurring on $z$ at
  time $t_{i+l+1}$. Since, the minimum load never decreases, we have
  that
  
  \[
  L_{t_{i+l+1}}(z) \leq (\alpha+1)\cdot Min_{t_{i+l+1}} + (i+l-k)\cdot
  Min_{t_{i+l+1}},
  \]

  and finally we have $L_{t_{i+l+1}}(z)\leq (\alpha+l+1)\cdot
  Min_{i+l+1}$, because $k> i$.
  
  We are left to consider the case that there is no other events
  occurring on $z$ except the min-transfer event between the $i$-th
  and $(i+l+1)$-th min-transfer events. We follow Claim~\ref{clamin:1} in the proof
  of Lemma~\ref{lem:genneralizemintransfer} using the property that in
  each phase the minimum load does not decrease. Then, we have
  
  \begin{equation}\label{eqn:5}
    L_{t_{i+l+1}}(z) \leq L_{t_{i}}(z) + (l+1)\cdot
    Min'_{t_{i+l+1}}.
  \end{equation}

  Consider the $i$-th min-transfer event on $z$. Recall that, it is a
  right-transfer event. Let $x$ be the node on the right of $v_i$
  right before the $i$-th min-transfer event.  Note
  that, because the $(i+l+1)$-th min-transfer event is also
  a right-transfer event, $x$ must become the minimum-loaded
  node at some point after time $t_i$.  Let $t^*$ be the time that $x$ becomes the minimum-loaded
  node after time $t_i$.
 
  Instead of focusing on node $x$, we consider a set of nodes $P$
  that arranges in a consecutive order from $x$ to $v_{i+l+1}$ between
  time $t_i$ and $t_{i+l+1}$ (see Figure~\ref{fig:2} (b)). We 
  bound the load of $z$ by the load of node in this set. There are two
  cases.

  \textbf{Case 1:} No nodes in set $P$ calls {\minbalance},
  {\splitmax} and {\splitnbr} steps between time $t_i$ and
  $t_{i+l+1}$. In this case, we consider the node $x$. Note that deletion may occur on $x$.  
  There are two sub cases.

  \textbf{Sub case 1.1:} No deletion occurs on $x$ after $t_i$.
  Consider the $i$-th min-transfer event occurring on $z$. The minimum-loaded
  node at time $t_i$ transfers its load to $z$. That means
  $L'_{t_i}(z)\leq L'_{t_i}(x)$. Then, we have
\[
L_{t_i}(z)=L'_{t_i}(z)+Min'_{t_i}\leq L'_{t_i}(x)+Min'_{t_i},
\]

and from Eq.~(\ref{eqn:5}), we have

\begin{equation}\label{eqn:6}
  L_{t_{i+l+1}}(z) \leq L'_{t_i}(x)+Min'_{t_i} + (l+1)\cdot
  Min'_{t_{i+l+1}}.
\end{equation}

After time $t_i$, deletion and load-balancing steps do not occur on
$x$.  The operation that can occur on it is an insertion. Then, $x$'s
load does not decrease.  At time $t^*$, $x$ becomes the minimum-loaded
node. Then, $L'_{t_i}(x)\leq Min_{t^*}$.  From Eq.~(\ref{eqn:6}), we
have

\[L_{t_{i+l+1}}(z) \leq Min_{t^*}+Min'_{t_i} + (l+1)\cdot
Min'_{t_{i+l+1}}.\]

Because the minimum load does not decrease, we have $ L_{t_{i+l+1}}(z)
\leq (l+3)\cdot Min_{t_{i+l+1}}$. When $\alpha\geq 2$, we have
$L_{t_{i+l+1}}(z)\leq (\alpha+l+1)\cdot Min_{t_{i+l+1}}$. Thus, the
lemma holds in this sub case.

\textbf{Sub case 1.2:} Some deletion occurs on $x$ after $t_i$ but
{\splitmax} and {\splitnbr} steps are not invoked. Let $t'$ be the
time that $x$ has the minimum load after deletion occurring on it
between time $t_i$ and $t^*$. From Lemma~\ref{lem:delete-property},
the maximum load at time $t'$ is not over $\beta$ times  the load of
$x$ at time $t'$.  Then, we have
\begin{equation}\label{eqn:max_time_t}
 Max_{t'}\leq\beta L_{t'}(x)\leq\beta L_{t^*}(x). 
 \end{equation}

 Note that the only event that can occur on $z$ between time $t_i$
 and $t'$ is a min-transfer event. Thus, the load of $z$ after $t_i$
 does not decrease. Then, we have $L_{t_i}(z)\leq L_{t'}(z)$. Consider
 Eq.~(\ref{eqn:5}). We have

 \[L_{t_{i+l+1}}(z)\leq L_{t'}(z) + (l+1)\cdot Min'_{t_{i+l+1}}.\]

 From Eq.~(\ref{eqn:max_time_t}), we have 
 \[ L_{t_{i+l+1}}(z)\leq
 Max_{t'} + (l+1)\cdot Min'_{t_{i+l+1}} \leq \beta L_{t^*}(x) +
 (l+1)\cdot Min'_{t_{i+l+1}}. \] 
 Since in each phase, the minimum load never
 decreases, we have $ L_{t_{i+l+1}}(z) \leq (\beta+ l+1)\cdot
 Min_{t_{i+l+1}}$. From the assumption that $\alpha\geq\frac{3+\sqrt{33}}{2}$, it follows that
 $\alpha\geq\beta$ and thus, the lemma holds in this sub case.

 \textbf{Case 2:} At least one node in $P$ calls {\minbalance}
 or {\splitmax} or {\splitnbr} steps between time $t_i$ and
 $t_{i+l+1}$.  Let $y$ be the latest node in this set that calls 
 load-balancing steps at time $\hat{t}$.  We consider the case that
 some deletion occurs on $y$ after $\hat{t}$. We can prove in the same
 way as sub case 1.2 and it follows that $L_{t_{i+l+1}}(z)\leq(\alpha+
 l+1)\cdot Min_{t_{i+l+1}}$.
       
 We are left to consider the case that no deletion occurs on $y$ after
  {\minbalance} or {\splitnbr} or {\splitmax} steps occurring on $y$
 after $\hat{t}$.

 \textbf{Sub case 2.1:} Node $y$ calls {\minbalance} steps at time
 $\hat{t}$. This sub case can be proved in the same way as case 2 in
 Lemma~\ref{lem:genneralizemintransfer}. When $\alpha\geq 1+\sqrt{5}$,
 it follows that $ \frac{L_{t_{i+l+1}}(z)}{Min_{i+l+1}}\leq (\alpha+l+1).$
 Thus, the lemma holds in this sub case.

 \textbf{Sub case 2.2:} Node $y$ calls {\splitnbr} steps at time
 $\hat{t}$. From Lemma~\ref{lem:delete-property}, after {\splitnbr}
 steps, we have $Max_{\hat{t}}\leq \beta L_{\hat{t}}(y)$.  Note that
 the event that occurs on $z$ after $t_i$ is the min-transfer
 event. Then, $z$'s load does not decrease after $t_i$. We have
 $L_{t_i}(z)\leq L_{\hat{t}}(z)$. Consider Eq.~(\ref{eqn:5}). Then, it follows that

 \[L_{t_{i+l+1}}(z) \leq Max_{\hat{t}}+ (l+1)\cdot Min'_{t_{i+l+1}}
 \leq \beta\cdot L_{\hat{t}}(y)+ (l+1)\cdot Min'_{t_{i+l+1}}.\] 

  After time $\hat{t}$, no load-balancing steps or deletion occurs on $y$.  Then,
 after $\hat{t}$, $y$'s load never decreases.  Since the $(i+l+1)$-th 
 min-transfer event is a right-transfer event, $y$ must become the 
 minimum-loaded node at some point.  Let $t^{''}$ be
 the time that $y$ becomes the minimum-loaded node. 
 It follows that
 $L_{\hat{t}}(y)\leq L_{t^{''}}(y)$.  Then, we have
 $L_{t_{i+l+1}}(z)\leq(\beta+ l+1)\cdot Min_{t_{i+l+1}}$. When
 $\alpha\geq\frac{3+\sqrt{33}}{2}$, it follows that $\alpha\geq\beta$ and
 thus, the lemma holds in this sub case.

 \textbf{Sub case 2.3:} Node $y$ calls {\splitmax} steps at time
 $\hat{t}$. After $y$ calls these steps, it transfers its entire load
 to its neighbor $w$. Then, we have $L_{\hat{t}}(w)\geq 2 Min_{\hat{t}}$. After
 that, $y$ is relocated. Thus, we consider node $w$ instead. 
 
 We show that $w$ is in $P$. Assume by contradiction that $w$ is not in $P$.
 Note that the position of $w$ can be left or right of $y$.
 Consider the case that $w$ is right of $y$. $w$ can be outside $P$ when 
 $y$ must be the rightmost node in $P$. Because $v_{i+l+1}$ is the rightmost node 
 in $P$, this case contradicts. Consider the case that $w$ is left of $y$. 
 In this case, $y$ must be the leftmost node in $P$ and $y$ transfers its load to the node
 outside $P$. Note that the node which $y$ transfers its load is $z$.
 This contradicts that no events occur on $z$ except the min-transfer event.
  
 Consider the case that some
 deletion occurs on $w$ after $\hat{t}$. This case can be proved in the same
 way as sub case 1.2.

 Now, we assume  that no deletion occurs on $w$ after
 $\hat{t}$.  Recall that at time $\hat{t}$, we have $2
 Min_{\hat{t}}\leq L_{\hat{t}}(w)$.  Since the $(i+l+1)$-th
 min-transfer event is a right-transfer event, $w$ becomes the
 minimum-loaded node at some point after $\hat{t}$. Let $t^*$ be the time that $w$
 becomes the minimum-loaded node after $\hat{t}$.  After time
 $\hat{t}$, load-balancing steps and deletion do not occur on $w$.  Then,
  $w$'s load does not decrease.  We have
 $L_{\hat{t}}(w)\leq L_{t^*}(w)$.  Moreover, we have
 $L_{\hat{t}}(w)\leq Min'_{t_{i+l+1}}$ from the non-decreasing property
 of the minimum load.  Thus, $2 Min_{\hat{t}}\leq Min'_{t_{i+l+1}}$.

 Consider Eq.~(\ref{eqn:5}). We divide it by $Min_{i+l+1}$,
 i.e.,\begin{eqnarray*}\label{z-and-m-general}
   \frac{L_{t_{i+l+1}}(z)}{Min_{i+l+1}} &\leq& \frac{L_{t_i}(z)+
     (l+1)\cdot Min'_{t_{i+l+1}}}{Min_{i+l+1}}\\
   &\leq& \frac{L_{t_i}(z)+ (l+1)\cdot
     Min'_{t_{i+l+1}}}{Min'_{t_{i+l+1}}}\\
   & = & (l+1)+\frac{L_{t_i}(z)}{Min'_{t_{i+l+1}}}.
 \end{eqnarray*}  

 From the invariant, it follows that $L_{t_i}(z)\leq (\alpha+2)\cdot
 Min_{t_i}\leq (\alpha+2)\cdot Min_{\hat{t}}$ when
 $t_i<\hat{t}$. Recall that $2 Min_{\hat{t}}\leq Min'_{t_{i+l+1}}$.
 Then,
\[
\frac{L_{t_{i+l+1}}(z)}{Min_{i+l+1}} \leq (l+1)+\frac{(\alpha+2)\cdot
  Min_{\hat{t}}}{2Min_{\hat{t}}}.
\]

When $\alpha>2$, it follows that $\frac{(\alpha+2)}{2}\leq\alpha$. Then, we
have $ \frac{L_{t_{i+l+1}}(z)}{Min_{i+l+1}}\leq (\alpha+l+1).  $ Thus,
the lemma holds in this sub case.
\end{proof}

From Lemma~\ref{lem:minimum-decrease},
~\ref{lem:event-before-min-transfer}
and~\ref{lem:genneralizemintransfer2}, we conclude the effect of the
min-transfer event on any node $z$. We omit the proof because it is
similar to the proof of Lemma~\ref{lem:min-transfer}.

\begin{lemma}
  \label{lem:min-transfer-general}
  Consider any phase $d$.  Suppose $\alpha \geq
  \frac{3+\sqrt{33}}{2}\approx 4.373$ and $\beta\geq
  \frac{3(\alpha+2)}{\alpha}$. After the $i$-th min-transfer event occurs on
  $z$, $ L_{t_i}(z)\leq(\alpha + 2)\cdot Min_{t_i}.  $
\end{lemma}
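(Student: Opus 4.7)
The plan is to proceed by induction on the index $i$ of the min-transfer event on $z$ within phase $d$, paralleling the case analysis used in the proof of Lemma~\ref{lem:min-transfer}. The three new ingredients (compared to the insert-only proof) are: Lemma~\ref{lem:minimum-decrease} to handle the very first event on $z$ inside a phase (in place of the trivial uniform-load base case), Lemma~\ref{lem:event-before-min-transfer} to handle any min-transfer event whose immediate predecessor on $z$ is an insertion, deletion, nbr-transfer, or nbr-share event, and Lemma~\ref{lem:genneralizemintransfer2} (which absorbs the harder deletion-related sub-cases) in place of Lemma~\ref{lem:genneralizemintransfer}. Throughout, the key property that lets the insert-only argument go through is that the minimum load is non-decreasing inside a phase, which is exactly the framework introduced in Section~\ref{sect:inside-phase}.

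For the base case, suppose the $i$-th min-transfer is the first event on $z$ in phase $d$. Then Lemma~\ref{lem:minimum-decrease} gives $L_{t_i}(z)\le(\beta+1)\cdot Min_{t_i}$. A quick parameter check shows that the hypothesis $\alpha\ge(3+\sqrt{33})/2$ is precisely what forces the minimum allowed $\beta=3(\alpha+2)/\alpha$ to satisfy $\beta\le\alpha$, hence $(\beta+1)\le(\alpha+2)$ as required. For the inductive step at the $(k{+}1)$-th min-transfer event, let $e'$ be the latest event on $z$ before time $t_{k+1}$. If $e'$ is not a min-transfer, Lemma~\ref{lem:event-before-min-transfer} directly yields $L_{t_{k+1}}(z)\le(\alpha+1)\cdot Min_{t_{k+1}}$. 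Otherwise $e'$ is the $k$-th min-transfer; I split on whether it is of the same type as the $(k{+}1)$-th. Same type: apply Lemma~\ref{lem:genneralizemintransfer2} with $l=0$. Different type: examine the $(k{-}1)$-th min-transfer (if any) and branch exactly as in sub-cases 2.1, 2.2, 2.3 of Lemma~\ref{lem:min-transfer}. In sub-case 2.1 (no $(k{-}1)$-th exists on $z$ in this phase) I bound $L_{t_k}(z)$ by $(\alpha+1)\cdot Min_{t_k}$ via the base case, add $Min'_{t_{k+1}}$ using the telescoping identity, and invoke in-phase monotonicity of $Min$; in sub-case 2.2 I instead use Lemma~\ref{lem:genneralizemintransfer2} with $l=0$ at step $k$ and then add $Min'_{t_{k+1}}$; in sub-case 2.3 (where the $(k{-}1)$-th and $(k{+}1)$-th have the same type and the $k$-th is different) I apply Lemma~\ref{lem:genneralizemintransfer2} with $l=1$ directly to the triple of events. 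In all cases the bound $(\alpha+2)\cdot Min_{t_{k+1}}$ follows.

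The main obstacle is not conceptual but bookkeeping: one must track the type pattern of the last three min-transfer events on $z$ together with whether any non-min-transfer event has occurred on $z$ since the previous min-transfer, and invoke the right auxiliary lemma in each of the resulting sub-cases. The heavy lifting (in particular the deletion-driven sub-cases where {\splitmax} or {\splitnbr} fires on a node near $v_i$) has already been done inside Lemma~\ref{lem:genneralizemintransfer2}, so here the only delicate point is the parameter-fitting at the phase boundary, which the choice $\alpha\ge(3+\sqrt{33})/2$ resolves by making $\beta\le\alpha$ attainable.
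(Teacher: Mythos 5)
Your proof is correct and takes essentially the same route as the paper: the authors omit this proof, stating only that it mirrors Lemma~\ref{lem:min-transfer}, and you reconstruct exactly that induction, substituting Lemma~\ref{lem:minimum-decrease} for the base case, Lemma~\ref{lem:event-before-min-transfer} for non-min-transfer predecessors, and Lemma~\ref{lem:genneralizemintransfer2} with $l=0,1$ for the type-pattern sub-cases. Your parameter check that $\alpha\geq\frac{3+\sqrt{33}}{2}$ is exactly the condition making $\beta=\frac{3(\alpha+2)}{\alpha}\leq\alpha$ matches the implicit assumption ($\beta\leq\alpha$) the paper relies on elsewhere.
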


Finally, we are ready to prove the imbalance ratio guarantee. In each
phase, the load of any node can be changed by insertion, deletion,
nbr-transfer, nbr-share and min-transfer events.  From
Lemma~\ref{lem:insert-property}, \ref{lem:delete-property},
\ref{lem:load-after-splitmax-splitnbr}
and~\ref{lem:min-transfer-general},  we can conclude the upper bound of
load of any node in any phase after these events.

\begin{theorem}\label{thm:maxload2}
  Consider any phase. Suppose $\alpha \geq
  \frac{3+\sqrt{33}}{2}\approx 4.373$ and $\beta\geq
  \frac{3(\alpha+2)}{\alpha}$. For any node $u \in V$, after any event at time
  $t$, $ L_t(u)\leq(\alpha+2)\cdot Min_t. $ Thus, the imbalance ratio
  of the algorithm in the general case is a constant.
\end{theorem}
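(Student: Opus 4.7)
The plan is to establish the invariant $L_t(u)\leq (\alpha+2)\cdot Min_t$ by induction on the sequence of events, handling phase boundaries as a distinct case. Within any single phase the minimum load is monotonically non-decreasing (only a deletion without load balancing can drop it, and by definition such a drop ends the phase), so the inductive hypothesis together with the preceding event-by-event lemmas controls every node whose load actually changes. For each event type I need only inspect the few affected nodes: Lemma~\ref{lem:insert-property} bounds the inserted node by $\alpha\cdot Min_t$; Lemma~\ref{lem:load-after-splitmax-splitnbr} bounds both the deleted node $u$ and its lightly-loaded neighbor $z$ by $\alpha\cdot Min_t$ when {\splitmax} or {\splitnbr} is triggered; Lemma~\ref{lem:min-transfer-general} bounds the receiving node $z$ of a min-transfer by $(\alpha+2)\cdot Min_t$; and Lemma~\ref{lem:delete-property} handles the deleted node in the case where no load-balancing is triggered. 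All nodes untouched by the event retain their previous bounds because the minimum does not decrease within a phase.

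The only non-routine ingredient is the transition between phases. A phase change can occur exactly when a deletion on the minimum-loaded node leaves its load above $Max_t/\beta$ (the case in Lemma~\ref{lem:delete-property} where no load-balancing is invoked); at that instant $Max_t/Min_t<\beta$. To restart the induction I then need $\beta\leq\alpha+2$. With $\beta=3(\alpha+2)/\alpha$ and $\alpha\geq(3+\sqrt{33})/2>3$, the inequality $3(\alpha+2)/\alpha\leq\alpha+2$ reduces to $\alpha\geq 3$, which holds; hence the new phase starts with $Max_t\leq\beta\cdot Min_t\leq(\alpha+2)\cdot Min_t$, exactly the invariant required. The base case of the overall argument is the initial configuration where every load equals $c_0$, so the invariant is trivial.

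Combining these pieces, after every event the invariant $L_t(u)\leq(\alpha+2)\cdot Min_t$ holds, and therefore the imbalance ratio $Max_t/Min_t$ is bounded by the constant $\alpha+2$ (with the usual additive $c_0$ slack coming from the initial load). The only real obstacle is bookkeeping across phase boundaries: every preceding lemma was proved under the assumption that the minimum does not decrease, so one must cleanly localize the argument to a phase and then appeal to the initial-condition bound $\beta\leq\alpha+2$ to re-enter the inductive regime. Once that single arithmetic check is in place, the theorem follows by straightforward case analysis over the four event types.
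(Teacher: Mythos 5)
Your proposal is correct and follows essentially the same route as the paper: the paper likewise combines the phase decomposition (with the phase-start condition $Max_t<\beta\cdot Min_t$ supplied by Lemma~\ref{lem:delete-property} and the arithmetic check that $\beta\leq\alpha+2$ under the stated hypotheses) with a case analysis over event types via Lemmas~\ref{lem:insert-property}, \ref{lem:delete-property}, \ref{lem:load-after-splitmax-splitnbr} and~\ref{lem:min-transfer-general}. The only cosmetic difference is that you verify $\beta\leq\alpha+2$ directly, whereas the paper records the stronger relation $\beta\leq\alpha$, which it already needs inside Lemma~\ref{lem:genneralizemintransfer2}.
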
	

\subsection{Cost of the algorithm in the general case}
\label{sect:imbalanceratioproof2}

We analyze the cost of our algorithm in the general case. Recall that, the
cost of moving a key from node $u$ to another node $v$ is counted as a unit
cost. In our algorithm, there are four operations, i.e., insertion,
deletion, {\minbalance} and {\split}. Again, we prove the amortized
cost by potential method and use the same potential function of
Ganesan {\em et al.}.

\begin{theorem}\label{thm:amortized_cost} 
 Suppose $\alpha>2(1+\sqrt{3})\approx 5.464$.  The amortized costs of our algorithm are constant.
\end{theorem}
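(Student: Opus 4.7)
The plan is to extend Theorem~\ref{thm:amortized_cost_insertonly} to the general case by reusing the potential $\Phi = c(\sum_{i=1}^{n} L_t(N_i)^2)/\bar{L}_t$. The insertion and {\minbalance} analyses go through verbatim once $c$ is chosen large enough, so the remaining task is to dispose of the three new event types: a bare deletion (when {\split} does not invoke any load-balancing step), a {\splitnbr} invocation, and a {\splitmax} invocation. Throughout I will use the invariant $Max_t \leq (\alpha+2) Min_t \leq (\alpha+2) \bar{L}_t$ from Theorem~\ref{thm:maxload2}.

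For a bare deletion, only $L_t(N_j)$ decreases by one and $\bar{L}_t$ decreases by $1/n$. A direct expansion of $\Delta\Phi$ shows it is bounded by $c$ times a quantity of order $(\sum_i L_t(N_i)^2)/(n\bar{L}_t^2)$, which by the invariant and the inequality $\sum_i L_t(N_i)^2 \leq Max_t \cdot n\bar{L}_t$ is at most a constant multiple of $\alpha+2$. Since the actual cost of a deletion is one key movement, its amortized cost is a constant.

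For {\splitnbr}, only $u$ and $z$ are affected, and because keys are only redistributed, $\bar{L}_t$ is unchanged. The drop in $\sum L^2$ equals $(L'_t(z)-L'_t(u))^2/2$. The triggering conditions give $L'_t(u)\leq Max'_t/\beta$ and $L'_t(z) > 2 Max'_t/\beta$, hence $L'_t(z)-L'_t(u) > Max'_t/\beta$; combined with $\bar{L}_t\leq Max'_t$, the drop in $\Phi$ is at least $c(L'_t(z)-L'_t(u))/(2\beta)$, which pays for the movement cost of $(L'_t(z)-L'_t(u))/2$ as soon as $c \geq \beta$.

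The main case, and the one I expect to require the most care, is {\splitmax}. Three nodes are affected: $w$'s load is halved, $u$ relocates and takes the other half of $w$, and $z$ absorbs $u$'s previous load. A direct expansion gives that the drop in $\sum L^2$ equals $(Max'_t)^2/2 - 2L'_t(u)L'_t(z)$; substituting the triggering conditions $L'_t(u)\leq Max'_t/\beta$ and $L'_t(z)\leq 2 Max'_t/\beta$ and then dividing by $\bar{L}_t \leq Max'_t$, the drop in $\Phi$ is at least $c\cdot Max'_t\cdot(1/2 - 4/\beta^2)$, to be compared against the movement cost of at most $Max'_t(1/\beta + 1/2)$ keys. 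The argument therefore requires $\beta^2 > 8$, which is where the careful bookkeeping of constants has to happen. Fortunately the assumption $\alpha > 2(1+\sqrt{3})$ (already forced by the {\minbalance} case) together with $\beta \geq 3(\alpha+2)/\alpha$ yields $\beta > 4$, comfortably above $2\sqrt{2}$, so a single constant $c$ large enough to dominate all four cases simultaneously can be chosen, completing the proof.
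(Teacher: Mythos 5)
Your proposal is correct and follows essentially the same route as the paper: the same potential function, the same four-way case split (deletion, {\splitnbr}, {\splitmax}, plus the insert-only cases reused verbatim), and the same bounds in each case, including the requirement $\beta^2>8$ for {\splitmax}, which holds since the paper's hypotheses already force $\beta>3$.
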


\begin{proof}
  We use the  potential function in~\cite{GanesanBGM04-vldb}, i.e., $\Phi=
  \frac{c(\sum^{n}_{i=1}(L_t(N_i))^2)}{\bar{L}_t}$, where $\bar{L}_t$ is the
  average load at time $t$ and $N_i$ be the $i$-th node that manages
  a range $[R_{i-1}, R_i)$. We show that the gain in potential
  when insertion or deletion occurs is at most a constant and the drop
  in potential when {\minbalance} or {\split} operation occurs pays
  for the cost of the operation. These imply that the amortized costs
  of insertion and deletion are constant.

  We prove the cost of insertion and {\minbalance} operation in the
  same way as Theorem~\ref{thm:amortized_cost_insertonly}. We are left
  to consider a deletion and {\split} operation.

{\bf Deletion}: When a deletion occurs on $N_j$ at time $t$, the
gain in potential is at most

\begin{eqnarray*}
  \Delta\Phi & \leq &c\frac{\left(\sum_{i\in
        V}{L_t(N_i)^2}\right)}{\left(\bar{L}_t-
      \frac{1}{n}\right)}-c\frac{\left(\sum_{i\in
        V}{L_t(N_i)^2}\right)}{\bar{L}_t}\\ 
  &=&c\frac{\left(\sum_{i\in V}{L_t(N_i)^2}\right)(\frac{1}{n})}{\bar{L}_t\cdot\left(\bar{L}_t-\frac{1}{n}\right)}.
  \end{eqnarray*}

  We know that $\bar{L}_t\geq Min_t$. From the invariant, we have that
  $L_t(u)\leq(\alpha+2)\cdot Min_t\leq(\alpha+2)\cdot\bar{L}_t$ for any node $u$.  Using the
  fact that $\bar{L}_t-\frac{1}{n} \geq \frac{\bar{L}_t}{2}$ where $n \geq 2$ and
  $\bar{L}_t \geq 1$, we have

  \begin{eqnarray*}
\Delta\Phi& \leq&
c\frac{\left((\alpha+2)\cdot\bar{L}_t\right)^2}{\bar{L}_t\cdot\left(\frac{\bar{L}_t}{2}\right)}
\leq 2c(\alpha+2)^2.
  \end{eqnarray*}

  Hence, the amortized cost of deletion is a constant because the actual cost
  of deletion is a unit cost.

  {\bf Split}: Consider the load of node $N_j$, whose calls {\split}
  at time $t$. When its load is less than $\frac{Max_t}{\beta}$, 
  load-balancing steps are called.  There are two cases, i.e.,
  {\splitnbr} and {\splitmax} steps.

  {\bf Case {\splitnbr}:} Let $N_k$ be the lightly-loaded neighbor of
  $N_j$ when $N_j$ calls {\splitnbr} steps. Node $N_k$ moves its keys
  to $N_j$ to balance their loads.  The drop in potential is
  \begin{eqnarray*}
    \Delta\Phi&=&
    \frac{c\left(L_t(N_j)^2+L_t(N_k)^2-2\left(\frac{L_t(N_j)+L_t(N_k)}{2}\right)^2\right)}{\bar{L}_t}\\
    &=&c\frac{\left(\frac{L_t(N_j)^2}{2}+\frac{L_t(N_k)^2}{2}-\left(L_t(N_j)L_t(N_k)\right)\right)}{\bar{L}_t}\\
    &=&c\frac{\left(L_t(N_k)-L_t(N_j)\right)^2}{2\bar{L}_t}\\		
    &=&c\frac{\left(L_t(N_k)-L_t(N_j)\right)}{2}\frac{\left(L_t(N_k)-L_t(N_j)\right)}{\bar{L}_t}.\\
  \end{eqnarray*}  
  
  This case occurs when
  $L_t(N_k)>\frac{2Max_t}{\beta}$ and $L_t(N_j)<\frac{Max_t}{\beta}$. Then,
  \begin{eqnarray*}
    \Delta\Phi&\geq
    &c\frac{\left(L_t(N_k)-L_t(N_j)\right)}{2}\frac{\left(\frac{2Max_t}{\beta}-\frac{Max_t}{\beta}\right)}{\bar{L}_t}\\
    &=&c\frac{\left(L_t(N_k)-L_t(N_j)\right)}{2}\frac{\left(\frac{Max_t}{\beta}\right)}{\bar{L}_t}\\
    &\geq
    &c\frac{\left(L_t(N_k)-L_t(N_j)\right)}{2}\left(\frac{1}{\beta}\right).
  \end{eqnarray*}

  The number of moved keys when {\splitnbr} steps are invoked is
  $\frac{\left(L_t(N_k)-L_t(N_j)\right)}{2}$. For any $c>\beta$, we have
  $\Delta\Phi\geq \frac{\left(L_t(N_k)-L_t(N_j)\right)}{2}$.  Thus,
  the drop in potential pays for the data movement.

  {\bf Case {\splitmax}:} Node $N_j$ transfers its entire load to its
  adjacent node $N_k$ and then shares half the load of the
  maximum-loaded node $N_l$.  The drop in potential is

  \begin{eqnarray*}
    \Delta\Phi&=& c\frac{\left(L_t(N_j)^2+L_t(N_k)^2+L_t(N_l)^2
        -2\left(\frac{L_t(N_l)}{2}\right)^2-\left(L_t(N_k)+L_t(N_j)\right)^2\right)}{\bar{L}_t}\\
    &=&
    c\frac{\left(\frac{L_t(N_l)^2}{2}-2L_t(N_k)L_t(N_j)\right)}{\bar{L}_t}\\
    &=& \frac{c}{\bar{L}_t}
    L_t(N_l)^2\left(\frac{1}{2}-\frac{2L_t(N_k)L_t(N_j)}{L_t(N_l)^2}\right).
\end{eqnarray*}  

From the algorithm, we know that $L_t(N_k)\leq \frac{2Max_t}{\beta}$
and $L_t(N_j)\leq \frac{Max_t}{\beta}$. Then,
\begin{eqnarray*}  			
  \Delta\Phi&\geq& c L_t(N_l)\left(\frac{1}{2}-\frac{2\frac{2
        Max_t}{\beta} \cdot\frac{ Max_t}{\beta}}{L_t(N_l)^2}\right)\\
  &\geq& c\cdot L_t(N_l)\left(\frac{1}{2}-\frac{4}{\beta^2
    }\right).
\end{eqnarray*}

When {\splitmax} steps are invoked, the number of moved keys is
$\frac{L_t(N_l)}{2}+L_t(N_j)\leq L_t(N_l)$. For any
$c>\left(\frac{2\beta^2}{\beta^2-8}\right)$, it follows that $\Delta\Phi\geq
L_t(N_l)$ and thus, the drop in potential pays for the key movement.

Thus, the amortized costs of the algorithm are constant.
\end{proof}

\section{The cost in real networks}
\label{sect:p2p}

In this section, we discuss how Ganesan {\em et
  al.}~\cite{GanesanBGM04-vldb} dealt with  global information and,
again, discuss the comparison between this line of work, which this
paper extends, and the work of Karger and
Ruhl~\cite{Karger03newalgorithms, Karger04simpleefficient} when
considering real networks.

In the P2P networks, there is no centralized server to
provide the information.  If any node requires information about another
node, it must send  messages to that node.  Besides the data
movement cost, there is another cost to be considered, the
communication cost.  The communication cost is defined as a number
of messages that required for complete the operation.

We shall discuss how Ganesan {\em et al.} implement the idea on real
networks.  Their implementation is based on skip
graphs~\cite{Aspnes-SODA03}.  Skip graphs support find operation,
node insertion, and node deletion with $O(\log n)$ messages with high
probability.  Also adjacent nodes can be contact with $O(1)$ messages.
Ganesan {\em at al.}  use two skip graphs: one where nodes are
ordered by their minimum key in their ranges; another where nodes are
ordered by their loads.  Therefore, global information can be found
with $O(\log n)$ messages, and each partition change costs at most
$O(\log n)$ messages.  We note that while this cost is more than the
constant cost of data movement, usually $O(\log n)$ messages are
required for finding the node for each key, and thus these cost can be
amortized with the searching cost.

As in Ganesan {\em et al.} works, Karger and
Ruhl~\cite{Karger03newalgorithms, Karger04simpleefficient} simplify
the cost model by not considering how one could find a given node in
the system.  Therefore, unless they maintain a global directory of
nodes, using known data structures for p2p systems, they still need
$O(\log n)$ messages as well.

\bibliography{reorder}

\begin{thebibliography}{10}

\bibitem{Aspnes-SODA03}
J.~Aspnes and G.~Shah.
\newblock Skip graphs.
\newblock In {\em Proceedings of the fourteenth annual ACM-SIAM symposium on
  Discrete algorithms}, SODA '03, pages 384--393, 2003.

\bibitem{Mercury_sigcomm04}
A.~R. Bharambe, M.~Agrawal, and S.~Seshan.
\newblock Mercury: supporting scalable multi-attribute range queries.
\newblock In {\em Proceedings of the 2004 conference on Applications,
  technologies, architectures, and protocols for computer communications},
  SIGCOMM '04, pages 353--366, 2004.

\bibitem{Ganesan03}
P.~Ganesan and M.~Bawa.
\newblock Distributed balanced tables: Not making a hash of it all.
\newblock Technical Report 2003-71, Stanford InfoLab, 2003.

\bibitem{GanesanBGM04-vldb}
P.~Ganesan, M.~Bawa, and H.~Garcia-Molina.
\newblock Online balancing of range-partitioned data with applications to
  peer-to-peer systems.
\newblock In {\em Proceedings of the Thirtieth international conference on Very
  large data bases - Volume 30}, VLDB '04, pages 444--455, 2004.

\bibitem{Harren02complexqueries}
M.~Harren, J.~M. Hellerstein, R.~Huebsch, B.~T. Loo, S.~Shenker, and I.~Stoica.
\newblock Complex queries in dht-based peer-to-peer networks.
\newblock In {\em Revised Papers from the First International Workshop on
  Peer-to-Peer Systems}, IPTPS '01, pages 242--259, 2002.

\bibitem{SkipNet03}
N.~J.~A. Harvey, M.~B. Jones, S.~Saroiu, M.~Theimer, and A.~Wolman.
\newblock Skipnet: a scalable overlay network with practical locality
  properties.
\newblock In {\em Proceedings of the 4th conference on USENIX Symposium on
  Internet Technologies and Systems - Volume 4}, USITS'03, pages 9--9, 2003.

\bibitem{baton05}
H.~V. Jagadish, B.~C. Ooi, and Q.~H. Vu.
\newblock Baton: a balanced tree structure for peer-to-peer networks.
\newblock In {\em Proceedings of the 31st international conference on Very
  large data bases}, VLDB '05, pages 661--672, 2005.

\bibitem{Joung08}
Y.-J. Joung.
\newblock Approaching neighbor proximity and load balance for range query in
  p2p networks.
\newblock {\em Comput. Netw.}, 52:1451--1472, May 2008.

\bibitem{Karger03newalgorithms}
D.~R. Karger and M.~Ruhl.
\newblock New algorithms for load balancing in peer-to-peer systems.
\newblock In {\em IRIS Student Workshop}, 2003.

\bibitem{Karger04simpleefficient}
D.~R. Karger and M.~Ruhl.
\newblock Simple efficient load balancing algorithms for peer-to-peer systems.
\newblock In {\em Proceedings of the sixteenth annual ACM symposium on
  Parallelism in algorithms and architectures}, SPAA '04, pages 36--43, 2004.

\bibitem{Kademlia02}
P.~Maymounkov and D.~Mazi\`{e}res.
\newblock Kademlia: A peer-to-peer information system based on the xor metric.
\newblock In {\em Revised Papers from the First International Workshop on
  Peer-to-Peer Systems}, IPTPS '01, pages 53--65, 2002.

\bibitem{skiplist90}
W.~Pugh.
\newblock Skip lists: A probabilistic alternative to balanced trees, 1990.

\bibitem{PHT2004}
S.~Ramabhadran, S.~Ratnasamy, J.~M. Hellerstein, and S.~Shenker.
\newblock Brief announcement: prefix hash tree.
\newblock In {\em Proceedings of the twenty-third annual ACM symposium on
  Principles of distributed computing}, PODC '04, pages 368--368, 2004.

\bibitem{CAN01}
S.~Ratnasamy, P.~Francis, M.~Handley, R.~Karp, and S.~Shenker.
\newblock A scalable content-addressable network.
\newblock In {\em Proceedings of the 2001 conference on Applications,
  technologies, architectures, and protocols for computer communications},
  SIGCOMM '01, pages 161--172, 2001.

\bibitem{Risson06Survey}
J.~Risson and T.~Moors.
\newblock Survey of research towards robust peer-to-peer networks: Search
  methods.
\newblock {\em Computer Networks}, 50:3485 -- 3521, December 2006.

\bibitem{Pastry01}
A.~I.~T. Rowstron and P.~Druschel.
\newblock Pastry: Scalable, decentralized object location, and routing for
  large-scale peer-to-peer systems.
\newblock In {\em Proceedings of the IFIP/ACM International Conference on
  Distributed Systems Platforms Heidelberg}, Middleware '01, pages 329--350,
  2001.

\bibitem{Chord03}
I.~Stoica, R.~Morris, D.~Liben-Nowell, D.~R. Karger, M.~F. Kaashoek, F.~Dabek,
  and H.~Balakrishnan.
\newblock Chord: a scalable peer-to-peer lookup protocol for internet
  applications.
\newblock {\em IEEE/ACM Trans. Netw.}, 11:17--32, 2003.

\bibitem{Dak06}
X.~Yang and Y.~Hu.
\newblock A scalable index architecture for supporting multidimensional range
  queries in peer-to-peer networks.
\newblock In {\em Proceedings of the International Conference on Collaborative
  Computing: Networking, Applications and Worksharing}, CollaborateCom '06,
  pages 1 --10, nov. 2006.

\bibitem{Yarqs09}
H.~Zhang, H.~Jin, and Q.~Zhang.
\newblock Yarqs: Yet another range queries schema in dht based p2p network.
\newblock In {\em Proceedings of the Ninth IEEE International Conference on
  Computer and Information Technology}, volume~1 of {\em CIT '09}, pages
  51--56, oct. 2009.

\bibitem{tapestry04}
B.~Y. Zhao, L.~Huang, J.~Stribling, S.~C. Rhea, A.~D. Joseph, and J.~D.
  Kubiatowicz.
\newblock Tapestry: A resilient global-scale overlay for service deployment.
\newblock {\em IEEE Journal on Selected Areas in Communications}, 22:41--53,
  2004.

\bibitem{DST06}
C.~Zheng, G.~Shen, S.~Li, and S.~Shenker.
\newblock Distributed segment tree: Support of range query and cover query over
  dht.
\newblock In {\em Electronic publications of the 5th International Workshop on
  Peer-to-Peer Systems}, IPTPS '06, 2006.

\end{thebibliography}
\end{document}